\documentclass[conference]{IEEEtran}
\usepackage[ruled,vlined]{algorithm2e}
\usepackage{cite}
\usepackage{graphicx}
\usepackage{psfrag}
\usepackage{subfigure}
\usepackage{url}
\usepackage{amsmath}
\usepackage{array}
\usepackage{amssymb}
\usepackage{amsfonts}
\usepackage{graphicx}
\usepackage{epstopdf}

\newtheorem{lemma}{Lemma}

\newtheorem{definition}{Definition}

\newtheorem{example}{Example}

\title{Wireless Network-Coded Three-Way Relaying Using Latin Cubes}
% \small
\begin{document}

\author{
\authorblockN{Srishti Shukla \textsuperscript{$\dagger$}, Vijayvaradharaj T Muralidharan\textsuperscript{$\#$}  and B. Sundar Rajan\textsuperscript{$\dagger$}}\\
\authorblockA{Email: {$\lbrace$srishti, tmvijay, bsrajan$\rbrace$} @ece.iisc.ernet.in\\
\textsuperscript{$\dagger$}IISc Mathematics Initiative (IMI), Dept. of Mathematics and Dept. of Electrical Comm. Engg., IISc, Bangalore\\
\textsuperscript{$\#$} Dept. of Electrical Comm. Engg., IISc, Bangalore
}
}

\maketitle
\pagestyle{plain}	
%%%%%%%%
\begin{abstract}
The design of modulation schemes for the physical layer network-coded three-way wireless relaying scenario is considered. The protocol employs two phases: Multiple Access (MA) phase and Broadcast (BC) phase with each phase utilizing one channel use. For the two-way relaying scenario, it was observed by Koike-Akino et al. \cite{KPT}, that adaptively changing the network coding map used at the relay according to the channel conditions greatly reduces the impact of multiple access interference which occurs at the relay during the MA phase and all these network coding maps should satisfy a requirement called \textit{exclusive law}. This paper does the equivalent for the three-way relaying scenario. We show that when the three users transmit points from the same 4-PSK constellation, every such network coding map that satisfies the exclusive law can be represented by a Latin Cube of Second Order. The network code map used by the relay for the BC phase is explicitly obtained and is aimed at reducing the effect of interference at the MA stage.
\end{abstract}

\section{Background And Preliminaries}
The concept of physical layer network coding has attracted a lot of attention in recent times. The idea of physical layer network coding for the two-way relay channel was first introduced in [1], where the multiple access interference occurring at the relay was exploited so that the communication between the end nodes can be done using a two stage protocol. Information theoretic studies for the physical layer network coding scenario were reported in [2], [3]. The design principles governing the choice of modulation schemes to be used at the nodes for uncoded transmission were studied in [4]. An extension for the case when the nodes use convolutional codes was done in [5]. A multi-level coding scheme for the two-way relaying was proposed in [6]. \\

We consider the three-way wireless relaying scenario shown in Fig. 1, where three-way data transfer takes place among the nodes A, B and C with the help of the relay R. It is assumed that the three nodes operate in half-duplex mode. The relaying protocol consists of two phases, \textit{multiple access} (MA) phase, consisting of one channel use during which A, B and C transmit to R; and \textit{broadcast} (BC) phase, in which R transmits to A, B and C in a single channel use. Network Coding is employed at R in such a way that A(/B/C) can decode B's and C's(/A's and C's /A's and B's) messages, given that A(/B/C) knows its own message.\\

For a two-way wireless relay channel, it was observed in [4] for 4-PSK, that for uncoded transmission, the network coding map used at the relay needs to be changed adaptively according to the channel fade coefficient, in order to minimize the impact of multiple access interference. In other words, the set of all possible channel realizations is quantized into a finite number of regions, with a specific network coding map giving the best performance in a particular region. It is shown in [7] for every M-PSK constellation used by both the users, that every such network coding map that satisfies the \textit{exclusive law} is representable as a Latin square and this relationship can be used to get the network coding maps satisfying the exclusive law. A Latin Square of order $M$ is defined to be an $M \times M$ array in which each cell contains a symbol from $\mathbb{Z}_{t}=\left\{0,1,...t-1\right\}$ such that each symbol occurs at most once in each row and column [8].\\

\begin{figure}[tp]
\center
\includegraphics[height=40mm]{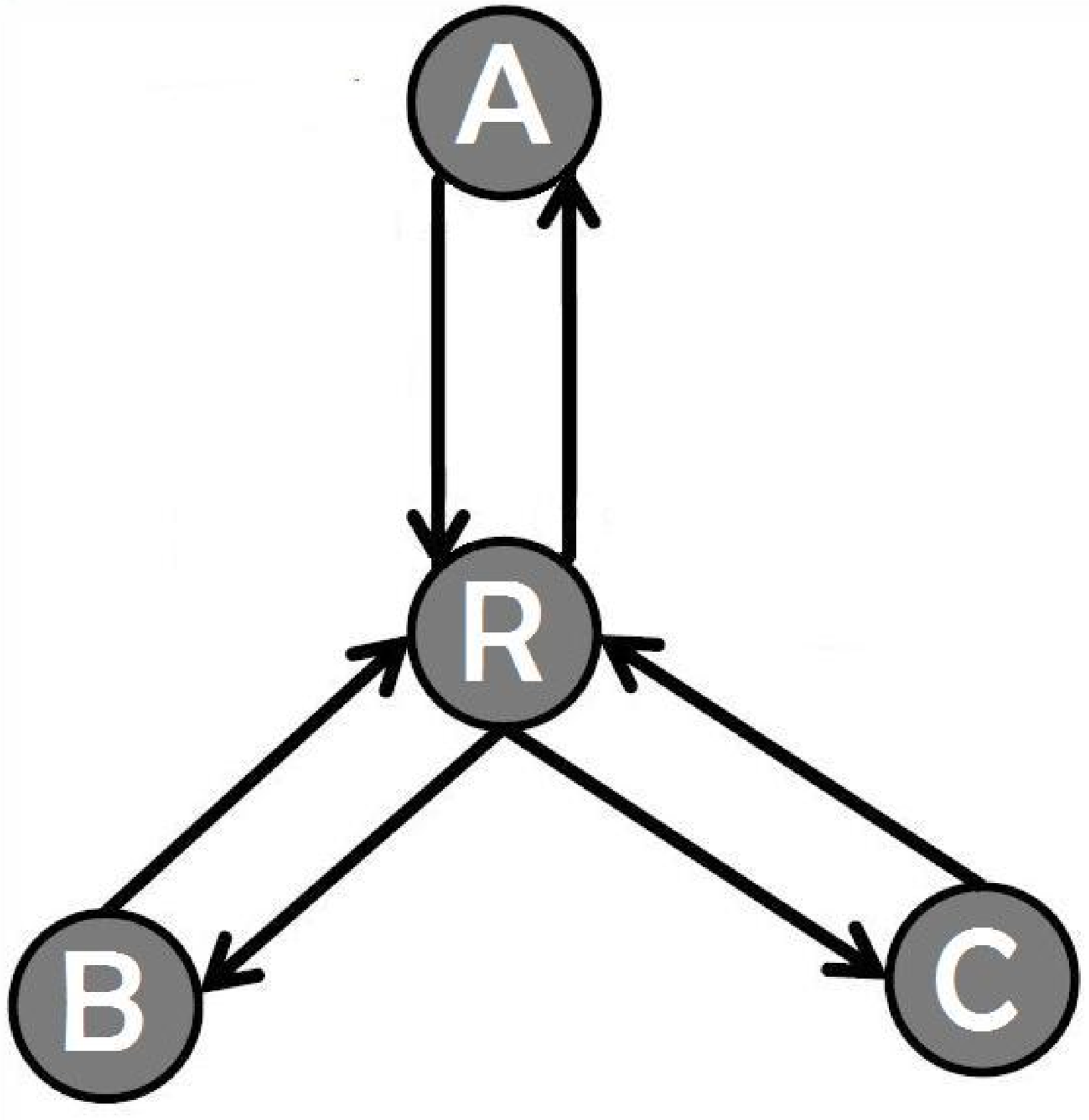}
\caption{A three-way relay channel}
\end{figure}

While most of the research has been done for two-way relay channels, some work has been done for the relay channels with three or more user nodes as well. Liu and Arapostathis [9], proposed a joint network coding and superposition coding for three user relay channels, and claim that the results can be easily extended to information relaying cases with more than three relay nodes. In this scheme, two stage operations are required for encoding and decoding, and four channel uses are required for the information exchange, three channel uses for the MA phase, and one channel use for the BC phase, while for our scheme totally two channel uses suffice. In [9], the first three channel uses are utilized by each user node transmitting its packet to the relay node. Then the relay makes two XOR-ed packets and superimposes them together for broadcast. The relay makes two XOR-ed packets, the packet from the node with the worse channel gain is XOR-ed respectively with the other two packets. Pischella and Ruyet [10] also discuss the three-way wireless relaying scenario, and propose a method of information exchange among the users composed of alternate MA and BC phases. The physical layer network coding strategy for this relaying protocol is a lattice-based coding scheme combined with power control, so that the relay receives an integer linear combination of the symbols transmitted by the user nodes. This scheme also, however, consists of four channel uses. Park and Oh, in [11], propose a network coding scheme for the three-way relay channels and present a `Latin square-like condition' for the three-way network code. They also discuss schemes in order to improve these codes using cell swapping techniques. In this work, though Latin Cubes have been suggested as being equivalent to the map used by the relay, the number of channel uses the scheme uses is five, and the work doesn't deal with the channel gains associated with the channels explicitly. In [12], authors Jeon et al. adopt an `opportunistic scheduling technique' for physical network coding where users in the MA as well as the BC phase are selected on the basis of instantaneous SNR using a channel norm criterion and a minimum distance criterion and plot graphs to show that the proposed scheme outperforms systems without this selection. Their approach, however, utilizes six channel uses. \\

For our physical layer network coding strategy, we use a mathematical structure called a Latin Cube, that has three dimensions out of which one is represented along the rows, one along the columns, and the third dimension is represented along `files'. In our case, we have A's transmitted symbol along the files, B's symbol along the rows, and C's symbol along the columns. For our purposes, we define Latin Cubes as follows:\\

\begin{definition}
A \textit{Latin Cube L of second order of side M} on the symbols from the set $\mathbb{Z}_{t}=\left\{0,1,2,...,t-1\right\}$ is an $M \times M \times M $ array, in which each cell contains one symbol and each symbol occurs at most once in each row, column and file. \\
\end{definition}

The above definition, is given in [13] with $t=M^{2}$.  

\begin{figure}[tp]
\center
\includegraphics[height=40mm]{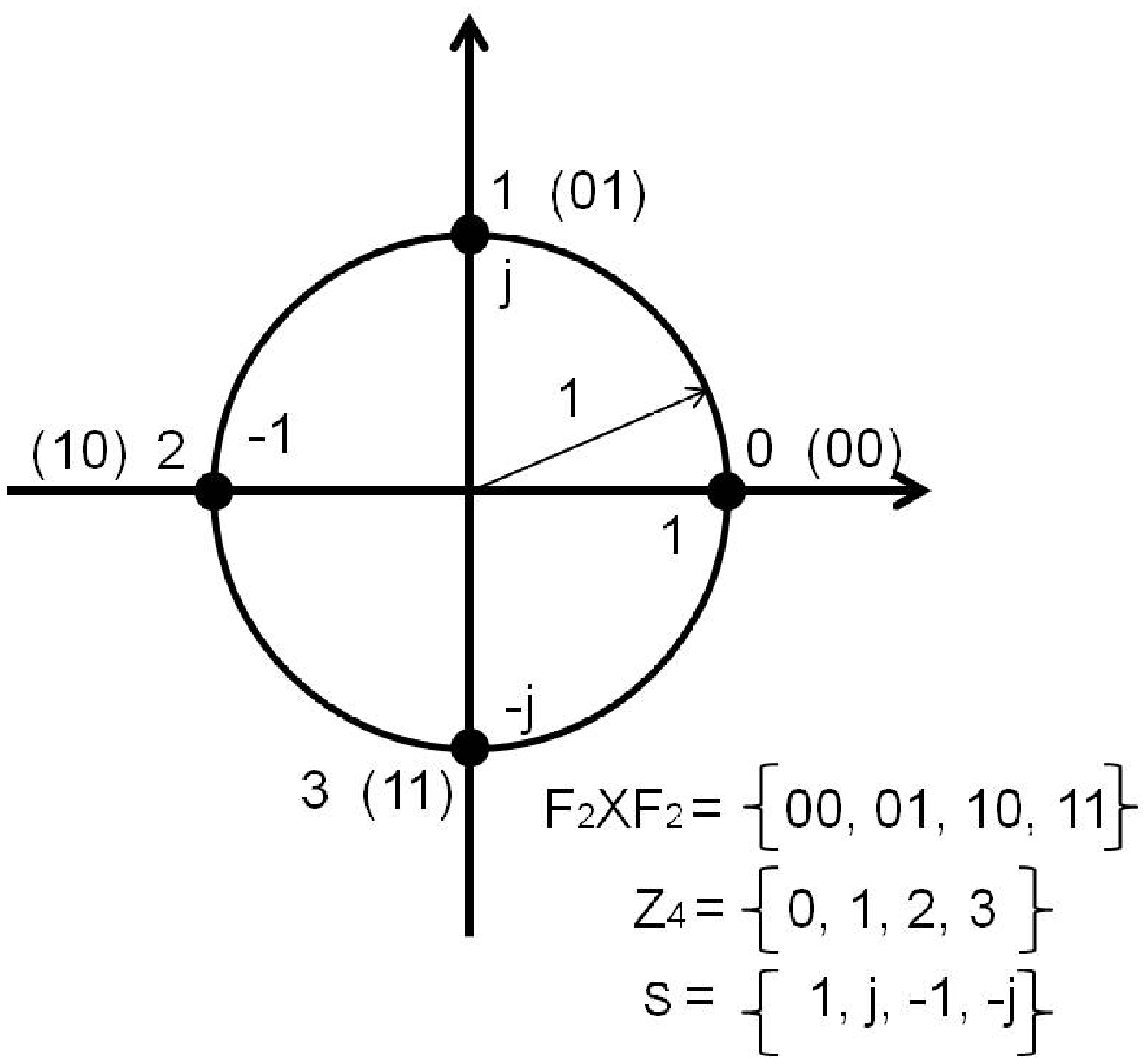}
\caption{4-PSK constellation}
\end{figure}

\subsection{Signal Model}

\noindent \textit{Multiple Access Phase:}\\
\indent Let $\mathcal{S}$ denote the symmetric 4-PSK constellation $\left\{\pm 1,~\pm j\right\}$ as shown in Fig. 2, used at A, B and C. Assume that A(/B/C) wants to send a 2-bit binary tuple to B and C(/A and C/A and B). Let $ \mu : \mathbb{F}^{2}_{2} \rightarrow \mathcal{S} $ denote the mapping from bits to complex symbols used at A, B and C where $\mathbb{F}_{2}=\left\{0,1\right\}$. Let $ x_{A}=\mu\left(s_{A}\right), x_{B}=\mu\left(s_{B}\right), x_{C}=\mu\left(s_{C}\right) \in \mathcal{S}$ denote the complex symbols transmitted by A, B and C respectively, where $s_{A}, s_{B}, s_{C} \in \mathbb{F}^{2}_{2}$. It is assumed that the channel state information is not available at the transmitting nodes A, B and C during the MA phase. The received signal at R in the MA phase is given by 
\begin{equation}
\label{yr}
Y_{R}=H_{A}x_{A}+H_{B}x_{B}+H_{C}x_{C}+Z_{R}
\end{equation}
\begin{figure*}
\footnotesize
\begin{align}
%\hline
\vspace{1 cm}
&
\label{dist}
d_{min}(H_A, H_B, H_C)=\hspace{-0.5 cm}\min_{\substack {{(x_A,x_B,x_C),(x'_A,x'_B,x'_C) \in \mathcal{S}^{3}} \\ {(x_A,x_B,x_C) \neq (x'_A,x'_B,x'_C)}}}\hspace{-0.5 cm}\vert H_A \left(x_A-x'_A\right)+H_B \left(x_B-x'_B\right) + H_C \left(x_C-x'_C\right)\vert \\
\hline
\vspace{1cm}
&
\label{cl1}
d_{min}^{\mathcal{L}_{i},\mathcal{L}_{j}}\left(H_A, H_B, H_C\right)=\hspace{-0.2 cm}\min_{\substack {{(x_A,x_B,x_C) \in \mathcal{L}_{i}},\\ (x'_A,x'_B,x'_C) \in \mathcal{L}_{j}}} \hspace{-0.2 cm}  \left| H_A \left( x_A-x'_A\right)+ H_B \left(x_B-x'_B\right) + H_C \left(x_C-x'_C\right) \right| \\
\hline
\vspace{1cm}
&
\label{cl2}
d_{min} \left(\mathcal{C}^{H_A, H_B, H_C}\right)=\hspace{-0.8 cm}\min_{\substack {{(x_A,x_B,x_C),(x'_A,x'_B,x'_C) \in \mathcal{S}^{3},} \\ {\mathcal{M}^{H_A, H_B, H_C}(x_A,x_B,x_C) \neq \mathcal{M}^{H_A, H_B, H_C}(x'_A,x'_B,x'_C)}}}\hspace{-0.8 cm} \left| H_A \left( x_A-x'_A\right)+ H_B \left(x_B-x'_B\right) + H_C \left(x_C-x'_C\right) \right| .\\
\hline
&
\label{mel1}
\mathcal{M}^{z_{1}, z_{2}}\left(x_{A},x_{B},x_{C}\right) \neq \mathcal{M}^{z_{1}, z_{2}}\left(x_{A},x'_{B},x'_{C}\right), \forall x_{A}, x_{B}, x'_{B}, x_{C}, x'_{C}  \in \mathcal{S}, \ whenever \ \left(x_{B},x_{C}\right) \neq \left(x'_{B},x'_{C}\right)\\
\vspace{1 cm}
&
\label{mel2}
\mathcal{M}^{z_{1}, z_{2}}\left(x_{A},x_{B},x_{C}\right) \neq \mathcal{M}^{z_{1}, z_{2}}\left(x'_{A},x_{B},x'_{C}\right), \forall x_{A}, x'_{A}, x_{B}, x_{C}, x'_{C}  \in \mathcal{S}, \ whenever \ \left(x_{A},x_{C}\right) \neq \left(x'_{A},x'_{C}\right)\\
\vspace{1 cm}
& 
\label{mel3}
\mathcal{M}^{z_{1}, z_{2}}\left(x_{A},x_{B},x_{C}\right) \neq \mathcal{M}^{z_{1}, z_{2}}\left(x'_{A},x'_{B},x_{C}\right), \forall x_{A}, x'_{A}, x_{B}, x'_{B}, x_{C}  \in \mathcal{S}, \ whenever \ \left(x_{A},x_{B}\right) \neq \left(x'_{A},x'_{B}\right)\\
\hline
\vspace{1cm}
&
\label{cl3}
d_{min} \left(\mathcal{C}^{h_A, h_B, h_C} , H_A, H_B, H_C \right)=\hspace{-0.8 cm}\min_{\substack {{(x_A,x_B,x_C),(x'_A,x'_B,x'_C) \in \mathcal{S}^{3},} \\ {\mathcal{M}^{h_A,h_B,h_C}(x_A,x_B,x_C) \neq \mathcal{M}^{h_A, h_B, h_C}(x'_A,x'_B,x'_C)}}}\hspace{-0.8 cm}  \left| H_A \left( x_A-x'_A\right)+ H_B \left(x_B-x'_B\right) + H_C \left(x_C-x'_C\right) \right|.\\
\hline
\nonumber
\vspace{-0.8cm}
\end{align}
\end{figure*}where $H_{A}$, $H_{B}$ and $H_{C}$ are the fading coefficients associated with the A-R, B-R and C-R link respectively. The additive noise $Z_{R}$ is assumed to be $\mathcal{CN}\left(0,\sigma^2 \right)$, where $\mathcal{CN}\left(0,\sigma^2 \right)$ denotes the circularly symmetric complex Gaussian random variable with variance $\sigma^2$. \\
%We assume a block fading scenario, with $z_{1}=\gamma_{1}e^{j\theta_{1}}=H_{B}/H_{A}$ and $z_{2}=\gamma_{2}e^{j\theta_{2}}=H_{C}/H_{A}$ referred to as the \textit{fade states} for the first transmission by A and B at the first channel use, and the second transmission by A and C at the second channel use respectively, where $\gamma_{1}, \gamma_{2} \in \mathbb{R}^+$ and $-\pi \leq \theta_{1}, \theta_{2} \leq \pi$, and for simplicity also denoted by $\left(\gamma_{1}, \theta_{1}\right)$ and $\left(\gamma_{2}, \theta_{2}\right)$. Also, it is assumed that $z_{1}, z_{2}$ are distributed according to a continuous probability distribution. 

Let $ \mathcal{S}_{R} \left( H_A, H_B, H_C \right)$ denote the effective constellations seen at the relay during the MA phase channel use, i.e., 
$$ \mathcal{S}_{R} \left( H_A, H_B, H_C \right) = \left\{H_A x_{i} + H_B x_{j} + H_C x_{k}| x_{i}, x_{j}, x_{k} \in \mathcal{S}\right\}. $$

Let $d_{min}\left(H_A, H_B, H_C\right)$ denote the minimum distance between the points in the constellation $ \mathcal{S}_{R} \left( H_A, H_B, H_C \right) $ as given in (\ref{dist}), where $ \mathcal{S}^{n}=\mathcal{S} \times \mathcal{S} \times .. \times \mathcal{S} \ \tiny{(n \ times)}$. From (\ref{dist}), it is clear that there exists values of $(H_A, H_B, H_C)$, for which $d_{min}\left(H_A, H_B, H_C\right)=0$. Let $\mathcal{H}=\left\{ (H_A, H_B, H_C) \in \mathbb{C}^3 | d_{min}\left(H_A, H_B, H_C\right)=0 \right\}$. The elements of $\mathcal{H}$ are called singular fade states. For singular fade states, $\left|\mathcal{S}_{R} \left( H_A, H_B, H_C \right)\right| < 4^{3}$. \\

\begin{definition}
A fade state $(H_A, H_B, H_C)$ is defined to be a \textit{singular fade state} for the MA phase of three-way relaying, if the cardinality of the signal set $ \mathcal{S}_{R} \left( H_A, H_B, H_C \right)$ is less than $4^{3}$. Let $\mathcal{H}$ denote the set of all singular fade states for the three-way data transfer among A, B and C.\\
\end{definition}

Let $\left(\hat{x}_{A}, \hat{x}_{B}, \hat{x}_{C}\right) \in \mathcal{S}^{3}$ denote the Maximum Likelihood (ML) estimate of $\left(x_{A}, x_{B}, x_{C}\right) $ at R based on the received complex number $Y_{R}$, i.e.,
{
\begin{align}
\footnotesize
\left(\hat{x}_{A}, \hat{x}_{B},\hat{x}_{C}\right)=\arg \min_{\left({x_{A}}, {x_{B}}, {x_{C}}\right) \in \mathcal{S}^{3}}\left\|Y_R - HX\right\|
\end{align}
}
where,
\begin{align}
\vspace{-0.5cm}
\nonumber
\vspace{0.1cm}
& H=\left[H_{A}\  H_{B}\  H_{C}\right]\\
\nonumber
\vspace{0.1cm}
& X=
\left[ {\begin{array}{cc}
\vspace{0.15cm}
x_{A} \\
\vspace{0.15cm}
x_{B} \\
x_{C} \\
\end{array} } \right].
\nonumber
\end{align}

\noindent \textit{Broadcast (BC) Phase:}\\
\indent The received signals at A, B and C during the BC phase are respectively given by,
\begin{equation}
Y_{A}=H_{A}^{'}X_{R}+Z_{A},\ Y_{B}=H_{B}^{'}X_{R}+Z_{B},\ Y_{C}=H_{C}^{'}X_{R}+Z_{C} 
\end{equation}

\noindent where $X_{R}=\mathcal{M}^{H_A, H_B, H_C}\left(\left(\hat{x}_{A},\hat{x}_{B},\hat{x}_{C}\right)\right) \in \mathcal{S}^{'}$ is the complex number transmitted by R. The fading coefficients corresponding to the R-A, R-B and R-C links are given by $H_{A}^{'},$ $H_{B}^{'},$ and $H_{C}^{'}$ respectively and the additive noises $Z_{A},$ $Z_{B}$ and $Z_{C}$ are  $\mathcal{CN}\left(0,\sigma^{2}\right)$. Depending on the values of $H_A$, $H_B$ and $H_C$, R chooses a many to one map $\mathcal{M}^{H_A, H_B, H_C} : \mathcal{S}^3 \rightarrow \mathcal{S}^{'} $ where $\mathcal{S}^{'}$ is a signal set of size between $4^{2}$ and $4^{3}$ used by R during \textit{BC} phase. Notice that the minimum required size for $\mathcal{S}^{'}$ is 16, since 4 bits about the other two users needs to be conveyed to each of A, B and C.  \\ 

The elements in $\mathcal{S}^3 $ which are mapped to the same signal point in $\mathcal{S}^{'}$ by the map $\mathcal{M}^{H_A, H_B, H_C}$ are said to form a cluster. Let $\left\{\mathcal{L}_{1}, \mathcal{L}_{2},.., \mathcal{L}_{l}\right\}$ denote the set of all such clusters. The formation of clusters is called clustering, denoted by $\mathcal{C}^{H_A, H_B, H_C}$. \\

\begin{figure*}
\center
\includegraphics[height=40mm]{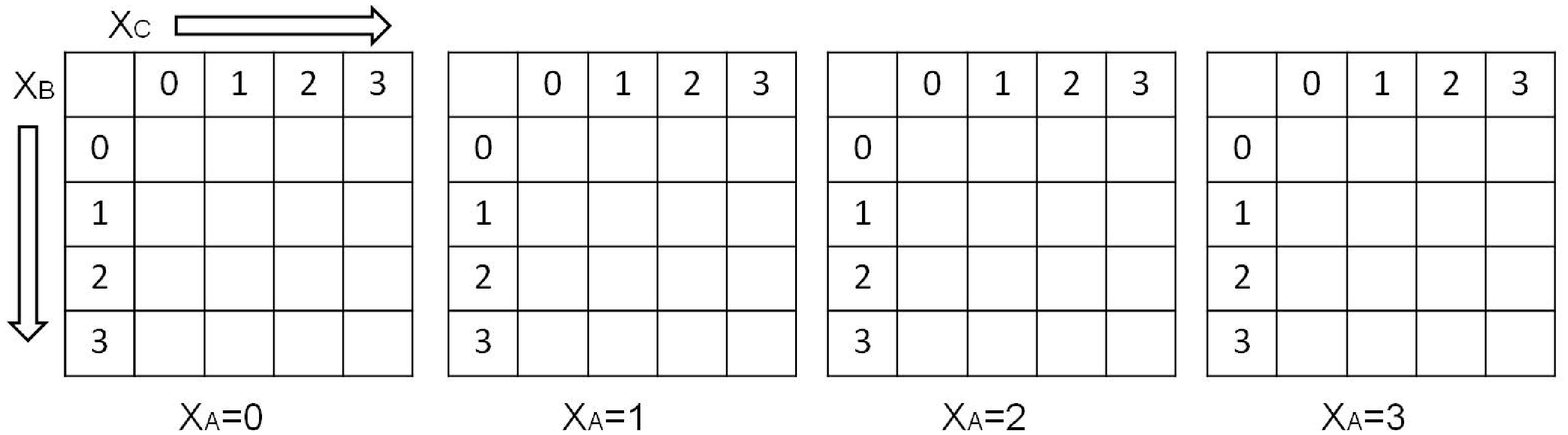}
\caption{The mapping observed at the Relay can be viewed as a Latin Cube of Second Order}
\end{figure*}

\begin{definition}
The cluster distance between a pair of clusters $\mathcal{L}_i$ and $\mathcal{L}_j$ is the minimum among all the distances calculated between the points $\left(x_{A}, x_{B}, x_{C}\right) \in \mathcal{L}_{i}$ and $\left(\acute{x_{A}}, \acute{x_{B}}, \acute{x_{C}}\right) \in \mathcal{L}_{j}$ in the effective constellation seen at the relay node R, as given in (\ref{cl1}) above.\\

\end{definition}

\begin{definition}
The \textit{minimum cluster distance} of the clustering $\mathcal{C}^{H_A, H_B, H_C}$ is the minimum among all the cluster distances, as given in (\ref{cl2}) at the top of this page.\\
\end{definition}

In order to ensure that A(/B/C) is able to decode B's and C's(/A's and C's /A's and B's) message, the clustering $\mathcal{C}$ should satisfy the exclusive law, as given in (\ref{mel1}), (\ref{mel2}), (\ref{mel3}) above.\\

The minimum cluster distance determines the performance during the MA phase of relaying. The performance during the BC phase is determined by the minimum distance of the signal set $\mathcal{S}^{'}$. For values of $\left(H_A, H_B, H_C\right)$ in the neighborhood of the singular fade states, the value of $d_{min}\left(\mathcal{C}^{H_A, H_B, H_C}\right)$ is greatly reduced, a phenomenon referred to as \textit{distance shortening} [4]. To avoid distance shortening, for each singular fade coefficient, a clustering needs to be chosen such that the minimum cluster distance at the singular fade state is non zero and is also maximized. \\

A clustering $\mathcal{C}^{H_A, H_B, H_C}$ is said to remove singular fade state $\left(H_A, H_B, H_C\right) \in \mathcal{H}, $ if $d_{min}\left(\mathcal{C}^{H_A, H_B, H_C}\right)>0$. For a singular fade state $\left(H_A, H_B, H_C\right) \in \mathcal{H} $, let $\mathcal{C}^{\left\{\left(H_A, H_B, H_C\right)\right\}} $ denote the clustering which removes the singular fade state $\left(H_A, H_B, H_C\right)$ (if there are multiple clusterings which remove the same singular fade state $\left(H_A, H_B, H_C\right)$, consider a clustering which maximizes the minimum cluster distance). Let $\mathcal{C_{H}}=\left\{\mathcal{C}^{\left\{\left(H_A, H_B, H_C\right)\right\}} : \left(H_A, H_B, H_C\right)\in \mathcal{H}\right\} $ denote the set of all such clusterings. \\

\begin{definition}
The minimum cluster distance of the clustering $\mathcal{C}^{h_A, h_B, h_C}$, when the fade state $(H_A, H_B, H_C)$ occurs in the MA phase, denoted by $d_{min}\left(\mathcal{C}^{h_A, h_B, h_C},H_A, H_B, H_C\right)$, is the minimum among all its cluster distances, as given in (\ref{cl3}).\\
\end{definition}

For $\left(H_A, H_B, H_C\right) \notin \mathcal{H}, $ the clustering $\mathcal{C}^{H_A, H_B, H_C} $ is chosen to be $\mathcal{C}^{\left\{\left(h_A, h_B, h_C\right)\right\}}$, which satisfies $ d_{min}\left(\mathcal{C}^{\left\{\left(h_A, h_B, h_C\right)\right\}},H_A, H_B, H_C\right) \geq d_{min}\left(\mathcal{C}^{\left\{\left(h'_A, h'_B, h'_C\right)\right\}},H_A, H_B, H_C\right), \forall \left(h_{A}, h_{B}, h_C\right) \neq \left(h'_{A}, h'_{B}, h'_{C}\right) \in \mathcal{H}.$
In [7], such clusterings that remove singular fade states are obtained with the help of Latin Squares while concentrating only on the first minimum cluster distance.  The clustering used by the relay is indicated to A, B and C using overhead bits.\\

The contributions of this paper are as follows:
\begin{itemize}
\item Using our proposed scheme, exchange of information in the wireless three-way relaying scenario is made possible with totally two channel uses.
\item It is shown that if the three users A, B, C transmit points from the same 4-PSK constellation, the requirement of satisfying the exclusive law is same as the clustering being represented by a Latin Cube of second order of side 4. To the best of our knowledge, this is the first work with only two channel uses for the three-way relaying scenario.
\item The singular fade states for the three user case are identified.
\item Clusterings that removes these singular fade states are obtained, that result in the size of the constellation used by the relay node R in the BC phase to lie between 16 to 23.
\item Simulation results are provided to verify that the adaptive clustering as obtained in the paper indeed performs better than non-adaptive clustering.
\end{itemize}
The remaining content is organized as follows: Section II demonstrates how a Latin Cube of Second Order and side 4 can be utilized to represent the network code for three user communication. In Section III the singular fade states are specified and in Section IV, clusterings corresponding to removal of each singular fade state are obtained using Latin Cubes of Second Order. Simulation results are shown in Section V. Section VI concludes the paper. 
\section{The Exclusive Law and Latin Cubes}

The nodes A, B and C transmit symbols from the same constellation, viz., 4-PSK. Our aim is to find the map that relay node R should use in order to cluster the $4^{3}$ possibilities of $\left(x_{A}, x_{B}, x_{C}\right)$ such that the exclusive law (given by (\ref{mel1}), (\ref{mel2}), (\ref{mel3})) is satisfied. The size of the constellation, or the number of clusters of the clustering that the relay utilizes has to be at least 16, since each user needs the 4 bit information corresponding to the other two users. Consider a $4 \times 4 \times 4$ array, whose 64 entries are indexed by $\left(x_{A}, x_{B}, x_{C}\right)$, i.e. the three messages that A, B and C send in the MA phase. Each file of this $4 \times 4 \times 4$ array, is indexed by a single value of $x_{A}$. Each row (column) of each file is indexed by a value of $x_B$ ($x_C$), for a fixed value of $x_A$. Now, a repetition of a symbol in a file results in the failure of exclusive law given by (\ref{mel1}). Consider the $4 \times 4$ array with its rows being the first(/second/third/forth) rows of the $4 \times 4 \times 4 $ array. Each $4 \times 4 $ array so obtained, corresponds to a single value of $x_{B}$. A repetition of a symbol in this array will result in the failure of exclusive law given by (\ref{mel2}). Similarly, consider the $4 \times 4$ array with its columns being the first(/second/third/forth) columns of the $4 \times 4 \times 4 $ array. Each $4 \times 4$ array so obtained, corresponds to a single value of $x_{C}$. A repetition of a symbol in this array will result in the failure of exclusive law given by (\ref{mel3}). Hence, if the exclusive law needs to be satisfied, then the cells of this array should be filled such that the $4 \times 4 \times 4 $ array so obtained, is a Latin cube of second order, for $t \geq 16$ (Definition 1). The clusters are obtained by putting together all the tuples $\left(i,j,k\right), i,j,k \in 0,1,...t-1 $ such that the entry in the $\left(i,j,k\right)$-th slot is the same entry from $\mathbb{Z}_{t}$. From above, we can say that all the relay clusterings that satisfy the mutually exclusive law forms Latin Cubes of second order of side 4 for $t \geq 16$, when the end nodes use 4-PSK constellations. Hence, now onwards, we consider the network code used by the relay node in the BC phase to be a $4 \times 4 \times 4$ array with files(/rows/columns) being indexed by the constellation point used by A(/B/C), symbols from the set $ \mathbb{Z}_{4}$ (Fig. 3). The cells of the array will be filled with elements of $\mathbb{Z}_{t}$ in such a way, that the resulting array is a Latin Cube of Second Order of side 4 and $t \geq 16$. Any arbitrary but unique symbol from $\mathbb{Z}_{t}$ denotes a unique cluster of a particular clustering.\\

%Consider a $16 \times 16$ array, that consists of the 16 possibilities of $\left(x_{A}, x_{B}\right)$, i.e. the pair of messages that A and B send at the first channel use, along the rows, and the 16 possibilities of $\left(x_{A}, x_{C}\right)$, i.e. the pair of messages that A and C send at the second channel use, along the columns, as shown in (Fig. 3(a)). Since A is assumed to have sent the same symbol in both the channel uses, it is required to consider only a part of this array, represented by the four diagonal $4 \times 4$ subarrays (Fig. 3(b)). 
\section{singular fade subspaces}

We earlier stated in Section II, that a clustering $\mathcal{C}^{H_A, H_B, H_C}$ is said to remove singular fade state $\left(H_A, H_B, H_C\right) \in \mathcal{H}, $ if $d_{min}\left(\mathcal{C}^{H_A, H_B, H_C}\right)>0$. Alternatively, removing singular fade states for a three-way relay channel can also be defined as follows:\\

\begin{definition}
A clustering $\mathcal{C}^{H_A, H_B, H_C}$ is said to \textit{remove the singular fade state} $\left(H_A, H_B, H_C\right) \in \mathcal{H}$, if any two possibilities of the messages sent by the users $\left(x_{A},x_{B},x_{C}\right), \left(x'_{A},x'_{B},x'_{C}\right) \in \mathcal{S}^{3}$ that satisfy\\
$$H_A x_A+ H_Bx_B+ H_C x_C=H_A x'_A+ H_B x'_B+ H_C x'_C
$$
are placed together in the same cluster by the clustering.\\
\end{definition}

\begin{definition}
A set $\left\{(x_A, x_B, x_C)\right\} \in \mathcal{S}^3 $ consisting of all the possibilities of $(x_A, x_B, x_C)$ that must be placed in the same cluster of the clustering used at relay node R in the BC phase in order to remove the singular fade state $\left(H_A, H_B, H_C\right)$ is referred to as a \textit{Singularity Removal Constraint} for the fade state $\left(H_A, H_B, H_C\right)$ for three-way relaying scenario.\\
\end{definition}

Let $(H_A, H_B, H_C) $ be the fade coefficient in the MA phase. The work in [4] and [7] shows that for the two-way relaying scenario, the $4^{2}$ possible pairs of symbols from 4-PSK constellation sent by the two users in the MA phase, can be clustered into a clustering dependent on a singular fade coefficient, of size 4 or 5 in a manner so as to remove this singular fade coefficient. In the case of three users, at the end of MA phase, relay receives a complex number, given by (\ref{yr}). Instead of R transmitting a point from the $4^{3}$ point constellation resulting from all the possibilities of $\left(x_{A}, x_{B}, x_{C}\right)$, the relay R can choose to group these  possibilities into clusters represented by a smaller constellation. We describe one such clustering in the following.\\

Let ${\Gamma}$ denote a singularity removal constraint corresponding to the singular fade state $(H_A, H_B, H_C) $ and let $(x_A, x_B, x_C), (x'_A, x'_B, x'_C) \in \mathcal{C} $. Then,
\begin{align}
\nonumber
& H_A x_A+ H_Bx_B+ H_C x_C=H_A x'_A+ H_B x'_B+ H_C x'_C \\
\nonumber
\Rightarrow & H_A (x_A-x'_A)+ H_B (x_B-x'_B)+ H_C (x_C-x'_C)=0 \\
\label{nullspace}
\Rightarrow & (H_A, H_B, H_C) \in 
\left\langle \left[ {\begin{array}{cc}
\vspace{0.15cm}
x_{A}-x'_A \\
\vspace{0.15cm}
x_{B}-x'_B \\
x_{C}-x'_C \\
\end{array} } \right]\right\rangle ^{\bot}
\end{align}where for a $3 \times 1$ non-zero vector $v$ over $\mathbb{C}$, $$\left\langle v\right\rangle ^{\bot}=\left\{w=(w_1,w_2,w_3) ~|~ w_1v_1+w_2v_2+w_{3} v_{3}=0 \right\}.$$ Clearly, $\left\langle v\right\rangle ^{\bot}$ is a two-dimensional vector space of $\mathbb{C}^3$. These values of $x_A, x_B, x_C, x'_A, x'_B, x'_C \in \mathcal{S}$ result in only finitely many possibilities for the right-hand side, since $\mathcal{S}$ is finite. Thus the singular fade states $(H_A, H_B, H_C),$ which are uncountably infinite, are points in a finite number of vector subspaces of $\mathbb{C}^3$. Henceforth, we shall refer to these finite number of vector subspaces as the \textit{Singular Fade Subspaces}. More precisely, there are three possibilities of singular fade subspaces for the three-way relaying as we explain individually in the following three cases.\\\\
\textit{Case 1:} One of the following subcases arise:
\begin{enumerate}
	\item $x_A=x'_A,~ x_B=x'_B \text{~and~} x_C \neq x'_C$
	\item $x_A=x'_A,~ x_B \neq x'_B \text{~and~} x_C=x'_C$
	\item $x_A \neq x'_A,~ x_B=x'_B \text{~and~} x_C=x'_C$
\end{enumerate}
\textit{Case 2:} One of the following subcases arise:
\begin{enumerate}
	\item $x_A=x'_A,~ x_B \neq x'_B \text{~and~} x_C \neq x'_C$
	\item $x_A \neq x'_A,~ x_B=x'_B \text{~and~} x_C \neq x'_C$
	\item $x_A \neq x'_A,~ x_B \neq x'_B \text{~and~} x_C=x'_C$
\end{enumerate}
\textit{Case 3:} $x_A \neq x'_A,~ x_B \neq x'_B \text{~and~} x_C \neq x'_C$\\\\
\textbf{\textit{Case 1:}} Without loss of generality, we discuss the third subcase of Case 1, i.e., the case when $x_A \neq x'_A,~ x_B=x'_B \text{~and~} x_C=x'_C$. The singular fade subspace in this case is given by 
$\mathcal{S}'=\left\langle \left[ {\begin{array}{cc}
\vspace{0.15cm}
x_{A}-x'_A \\
\vspace{0.15cm}
0 \\
0 \\
\end{array} } \right]\right\rangle ^{\bot}. $ 

The set of differences of the points of $\mathcal{S}$ is given by, $$\mathcal{D}=\left\{x_i-x_j ~|~ x_i, x_j \in \mathcal{S} \right\} = \left\{0, \pm 1\pm j, \pm2j, \pm 2\right\}. $$ Let $\mathcal{D}_{1}=\left\{ \pm1\pm j\right\}$ and $\mathcal{D}_{2}=\left\{ \pm2j, \pm2\right\}$. Then, $$\mathcal{D}=\left\{0\right\} \cup \mathcal{D}_{1} \cup \mathcal{D}_{2}.$$
Thus, $x_A- x'_A \in \mathcal{D}$ can take eight non-zero values. As a result, there are eight total possibilities for the vector $ \left[x_A-x'_A, ~0, ~0\right]^t $, where $v^t$ denotes the transpose of a vector $v$. Also, each one of $\pm 1\pm j$, $ \pm2j $ and $ \pm2$ can be obtained as scalar multiples of $1+j$ (over $\mathbb{C}$). Thus,\\\\
$ \footnotesize{
\left\langle \left[ {\begin{array}{cc}
\vspace{0.15cm}
1+j \\
\vspace{0.15cm}
0 \\
0 \\
\end{array} } \right]\right\rangle
=\left\langle \left[ {\begin{array}{cc}
\vspace{0.15cm}
\pm 1\pm j \\
\vspace{0.15cm}
0 \\
0 \\
\end{array} } \right]\right\rangle
=\left\langle \left[ {\begin{array}{cc}
\vspace{0.15cm}
\pm 2j \\
\vspace{0.15cm}
0 \\
0 \\
\end{array} } \right]\right\rangle 
=\left\langle \left[ {\begin{array}{cc}
\vspace{0.15cm}
\pm 2 \\
\vspace{0.15cm}
0 \\
0 \\
\end{array} } \right]\right\rangle.}\\$

So, $\exists$ only one singular fade subspace for the subcase, viz., $$\mathcal{S}'=\left\langle \left[ {\begin{array}{cc}
\vspace{0.15cm}
1+j \\
\vspace{0.15cm}
0 \\
0 \\
\end{array} } \right]\right\rangle ^{\bot}. $$ 
Similarly, for the other two subcases, there is one singular fade subspace, resulting in a total of 3 singular fade subspaces for the case.\\\\
\textbf{\textit{Case 2:}} Without loss of generality, we discuss the third subcase of Case 2, i.e., the case when $x_A \neq x'_A,~ x_B \neq x'_B \text{~and~} x_C=x'_C$. The singular fade subspace in this case is given by 
$\mathcal{S}''=\left\langle \left[ {\begin{array}{cc}
\vspace{0.15cm}
x_{A}-x'_A \\
\vspace{0.15cm}
x_B-x'_B \\
0 \\
\end{array} } \right]\right\rangle ^{\bot}.$

Here $x_A- x'_A \text{~and~} x_B-x'_B \in \mathcal{D}$ can take eight non-zero values each. There are therefore, 64 total possibilities for the vector $ \left[x_A-x'_A, ~x_B-x'_B, ~0\right]^t $. \\

\begin{lemma} For the case when $x_A- x'_A\neq 0,~  x_B-x'_B \neq 0 \text{~and~} x_C-x'_C = 0$, for a given vector $v= \left[x_A-x'_A, ~x_B-x'_B, ~0\right]^t$ over $\mathcal{D}_1 \cup \mathcal{D}_2$, there are precisely 4 or 8 vectors (including $v$) over $\mathcal{D}_1 \cup \mathcal{D}_2$ that generate the same vector space over $\mathbb{C}$ as $v$.
\end{lemma}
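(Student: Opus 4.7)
The plan is to rephrase the condition geometrically: two nonzero vectors in $\mathbb{C}^3$ span the same one-dimensional subspace if and only if they are nonzero scalar multiples of each other. Since every vector considered in the lemma has its third entry equal to $0$, the problem reduces to counting $\lambda \in \mathbb{C}\setminus\{0\}$ such that both $\lambda v_1$ and $\lambda v_2$ lie in $\mathcal{D}_1 \cup \mathcal{D}_2$, where $v = [v_1, v_2, 0]^t$. The inclusion of $v$ in the count is recovered by taking $\lambda = 1$.

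I would first record two structural identities. Letting $G = \{1, j, -1, -j\}$ denote the multiplicative group of fourth roots of unity, every element of $\mathcal{D}_1 = \{\pm 1 \pm j\}$ has modulus $\sqrt{2}$, and one checks directly that $\mathcal{D}_1 = (1+j) G$; similarly $\mathcal{D}_2 = 2G$ has every element of modulus $2$. Moreover $(1+j)\mathcal{D}_1 = \mathcal{D}_2$ and $\tfrac{1+j}{2}\,\mathcal{D}_2 = \mathcal{D}_1$, so $\mathcal{D}_1 \cup \mathcal{D}_2$ is stable under multiplication by $G$ (fixing each $\mathcal{D}_i$ setwise) and under multiplication by $1+j$ or $\tfrac{1+j}{2}$ (swapping the two sets). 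It follows immediately that the four scalars $\lambda \in G$ always produce four admissible vectors $\lambda v$, giving the base count of $4$.

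The bulk of the argument is a split on whether $|v_1| = |v_2|$. If both $v_1, v_2$ lie in $\mathcal{D}_1$, the four extra scalars $\lambda \in (1+j)G$ carry both coordinates into $\mathcal{D}_2$; symmetrically, if both lie in $\mathcal{D}_2$, the four extra scalars $\lambda \in \tfrac{1+j}{2}G$ carry both coordinates into $\mathcal{D}_1$, yielding $8$ in total. If instead $v_1$ and $v_2$ belong to different $\mathcal{D}_i$'s, then the modulus ratio $|v_1|/|v_2| \in \{1/\sqrt{2}, \sqrt{2}\}$ is preserved under scaling, so any admissible $\lambda v$ must again have one coordinate in $\mathcal{D}_1$ and the other in $\mathcal{D}_2$. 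This forces $|\lambda| = 1$; then the argument parity (odd multiples of $\pi/4$ for $\mathcal{D}_1$ versus even multiples of $\pi/4$ for $\mathcal{D}_2$) pins $\arg(\lambda)$ to a multiple of $\pi/2$, giving $\lambda \in G$ and hence only the four vectors already counted.

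The main obstacle is the converse direction in the unequal-modulus case: ruling out some exotic $\lambda$ that could still land $\lambda v$ back in $(\mathcal{D}_1 \cup \mathcal{D}_2)^2 \times \{0\}$. The modulus-and-argument analysis handles this cleanly, because $\mathcal{D}_1 \cup \mathcal{D}_2$ has exactly two modulus classes and its arguments exhaust the cyclic group $(\pi/4)\mathbb{Z}/2\pi\mathbb{Z}$ with a rigid parity split between the two sets; once the two coordinates of $v$ have different moduli, every valid $\lambda$ is forced into the four-element group $G$.
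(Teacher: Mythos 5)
Your proof is correct and takes essentially the same route as the paper: both arguments reduce "same subspace" to "scalar multiple," then run a modulus-and-argument analysis of the admissible scalars with the decisive case split on whether the two nonzero coordinates lie in the same set $\mathcal{D}_i$ (giving 8) or in different ones (giving 4). The only difference is presentational — you package the computation via the group structure $\mathcal{D}_1=(1+j)G$, $\mathcal{D}_2=2G$ with $G=\{\pm 1,\pm j\}$, which makes the argument self-contained and avoids the paper's reliance on the trigonometric parametrization and the ratio lemma imported from \cite{NMR}.
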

\begin{proof} As given in Section II of \cite{NMR}, for the 4-PSK constellation $\mathcal{S}$, the difference constellation $\mathcal{D}= \Delta \mathcal{S}=\left\{s-s': s,s' \in \mathcal{S}\right\}$ is of the form,
\begin{align}
\nonumber
\Delta \mathcal{S}=\left\{0\right\} & \cup \left\{2sin\left( \pi n / 4\right) e^{j k \pi / 2} | n \text{~odd}\right\}\\
\nonumber
&\cup \left\{2sin\left( \pi n / 4\right) e^{j\left( k \pi / 2 + \pi / 4\right)} | n \text{~even}\right\},
\end{align}
where $ 1 \leq n \leq 2 $ and $ 0 \leq k \leq 3 $. Therefore, we can write,
$$v= \left[ {\begin{array}{cc}
\vspace{0.15cm}
x_{A}-x'_A \\
\vspace{0.15cm}
x_B-x'_B \\
0 \\
\end{array} } \right] = \left[ {\begin{array}{cc}
\vspace{0.15cm}
2 sin \frac{ \pi k_{1}}{4} e^{j \phi_{1}} \\
\vspace{0.15cm}
2 sin \frac{ \pi k_{2}}{4} e^{j \phi_{2}} \\
0 \\
\end{array} } \right] $$
where $\phi_{i}=k_{i} \pi /2$ if $k_{i}$ is odd and $\phi_{i}=k_{i} \pi /2 + \pi/4$ if $k_{i}$ is even. \\

A vector $w$ over $\mathcal{D}_1 \cup \mathcal{D}_2$ shall generate the same vector space over $\mathbb{C}$ iff $w$ is a scalar multiple of $v$, i.e. for some complex number $r e^{j \theta} \in \mathbb{C}$, 
$$v = r e^{j\theta} w \Rightarrow \left[ {\begin{array}{cc}
\vspace{0.15cm}
2 sin \frac{ \pi k_{1}}{4} e^{j \phi_{1}} \\
\vspace{0.15cm}
2 sin \frac{ \pi k_{2}}{4} e^{j \phi_{2}} \\
0 \\
\end{array} } \right] = r e^{j\theta} \left[ {\begin{array}{cc}
\vspace{0.15cm}
2 sin \frac{ \pi k_{3}}{4} e^{j \phi_{3}} \\
\vspace{0.15cm}
2 sin \frac{ \pi k_{4}}{4} e^{j \phi_{4}} \\
0 \\
\end{array} } \right] $$ 
where for $i=3,4$ $\phi_{i}=k_{i} \pi /2$ if $k_{i}$ is odd and $\phi_{i}=k_{i} \pi /2 + \pi/4$ if $k_{i}$ is even. \\

Then,
\begin{equation}
\label{firstcomp}
2 sin \frac{ \pi k_{1}}{4} e^{j \phi_{1}} = r e^{j\theta} \times 2 sin \frac{ \pi k_{3}}{4} e^{j \phi_{3}}
\end{equation}
and
\begin{equation}
\label{secondcomp}
2 sin \frac{ \pi k_{2}}{4} e^{j \phi_{2}} = r e^{j\theta} \times 2 sin \frac{ \pi k_{4}}{4} e^{j \phi_{4}}.
\end{equation}

Dividing (\ref{firstcomp}) by (\ref{secondcomp}) and taking modulus of both sides, we get 
\begin{equation}
\label{rel}
\frac{sin \frac{ \pi k_{1}}{4}}{sin \frac{ \pi k_{2}}{4}}  =  \frac{sin \frac{ \pi k_{3}}{4} }{sin \frac{ \pi k_{4}}{4} }
\end{equation}
As shown in \cite{NMR}, this is possible only if $k_1=k_3$ and $k_2=k_4$. Also, from (\ref{firstcomp}) and (\ref{secondcomp}) we have 
\begin{equation}
\label{div}
\frac{ sin \frac{ \pi k_{1}}{4}}{sin \frac{ \pi k_{2}}{4}} e^{j (\phi_{1} - \phi_{2})} = \frac{ sin \frac{ \pi k_{3}}{4}}{sin \frac{ \pi k_{4}}{4}} e^{j (\phi_{3} - \phi_{4})}
\end{equation}
From (\ref{rel}) and (\ref{div}), we have 
\begin{equation}
\label{div}
e^{j (\phi_{1} - \phi_{2})} =  e^{j (\phi_{3} - \phi_{4})}
\end{equation}
Note that here, the LHS is fixed. It therefore suffices to compute that for the fixed value of the LHS, the number of values that RHS takes. It can be verified, that for a fixed value of $\phi_{1} - \phi_{2}$, there are precisely four pair of values of $\phi_{3} $ and $\phi_{4}$ that result in the same value of $\phi_{3} - \phi_{4}$. We now look at the following possibilities:\\
\textit{Case 1:} $k_1=k_2$. 
Then, $k_3=k_4$, i.e., there are exactly two possibilities for $k_1$ and $k_2$, viz., $k_1=k_2=1$ and $k_1=k_2=2$. With two pairs of values for $k_3$ and $k_4$ and four pairs of values for $\phi_{3} $ and $\phi_{4}$, we have a total of eight set of values that $w$ can take. Hence, in this case, the vector space generated by $v$ can be generated by exactly eight other vectors over $\mathcal{D}_1 \cup \mathcal{D}_2$. 
\textit{Case 2:} $k_1 \neq k_2$. 
Then, $k_3 \neq k_4$, i.e., there is precisely one possibility for $k_1$ and $k_2$, viz., $k_1=k_3$ and $k_1=k_4$. With only one possible set of values for $k_3$ and $k_4$ and four pairs of values for $\phi_{3} $ and $\phi_{4}$, we have a total of four set of values that $w$ can take. Hence, in this case, the vector space generated by $v$ can be generated by exactly four other vectors over $\mathcal{D}_1 \cup \mathcal{D}_2$. 

\end{proof}

In this case we end up with 12 singular fade subspaces given by the null spaces of the space given on the next page in Figure \ref{fig:sfscase2}.\\
\begin{figure*}
\centering
\tiny
$1. \left\langle \left[ {\begin{array}{cc}
\vspace{0.15cm}
1+j \\
\vspace{0.15cm}
1+j \\
0 \\
\end{array} } \right]\right\rangle
=\left\langle \left[ {\begin{array}{cc}
\vspace{0.15cm}
-1-j \\
\vspace{0.15cm}
-1-j \\
0 \\
\end{array} } \right]\right\rangle
=\left\langle \left[ {\begin{array}{cc}
\vspace{0.15cm}
1-j \\
\vspace{0.15cm}
1-j \\
0 \\
\end{array} } \right]\right\rangle 
=\left\langle \left[ {\begin{array}{cc}
\vspace{0.15cm}
-1+j \\
\vspace{0.15cm}
-1+j \\
0 \\
\end{array} } \right]\right\rangle
=\left\langle \left[ {\begin{array}{cc}
\vspace{0.15cm}
2j \\
\vspace{0.15cm}
2j \\
0 \\
\end{array} } \right]\right\rangle
=\left\langle \left[ {\begin{array}{cc}
\vspace{0.15cm}
-2j \\
\vspace{0.15cm}
-2j \\
0 \\
\end{array} } \right]\right\rangle
=\left\langle \left[ {\begin{array}{cc}
\vspace{0.15cm}
2 \\
\vspace{0.15cm}
2 \\
0 \\
\end{array} } \right]\right\rangle 
=\left\langle \left[ {\begin{array}{cc}
\vspace{0.15cm}
-2 \\
\vspace{0.15cm}
-2 \\
0 \\

\end{array} } \right]\right\rangle $~~~~~~~~~~~~\\
$2. \left\langle \left[ {\begin{array}{cc}
\vspace{0.15cm}
1+j \\
\vspace{0.15cm}
-1-j \\
0 \\
\end{array} } \right]\right\rangle
=\left\langle \left[ {\begin{array}{cc}
\vspace{0.15cm}
-1-j \\
\vspace{0.15cm}
1+j \\
0 \\
\end{array} } \right]\right\rangle
=\left\langle \left[ {\begin{array}{cc}
\vspace{0.15cm}
-1+j \\
\vspace{0.15cm}
1-j \\
0 \\
\end{array} } \right]\right\rangle 
=\left\langle \left[ {\begin{array}{cc}
\vspace{0.15cm}
1-j \\
\vspace{0.15cm}
-1+j \\
0 \\
\end{array} } \right]\right\rangle
=\left\langle \left[ {\begin{array}{cc}
\vspace{0.15cm}
2j \\
\vspace{0.15cm}
-2j \\
0 \\
\end{array} } \right]\right\rangle
=\left\langle \left[ {\begin{array}{cc}
\vspace{0.15cm}
-2j \\
\vspace{0.15cm}
2j \\
0 \\
\end{array} } \right]\right\rangle
=\left\langle \left[ {\begin{array}{cc}
\vspace{0.15cm}
2 \\
\vspace{0.15cm}
-2 \\
0 \\
\end{array} } \right]\right\rangle 
=\left\langle \left[ {\begin{array}{cc}
\vspace{0.15cm}
-2 \\
\vspace{0.15cm}
2 \\
0 \\
\end{array} } \right]\right\rangle $\\
$3. \left\langle \left[ {\begin{array}{cc}
\vspace{0.15cm}
1+j \\
\vspace{0.15cm}
1-j \\
0 \\
\end{array} } \right]\right\rangle
=\left\langle \left[ {\begin{array}{cc}
\vspace{0.15cm}
-1-j \\
\vspace{0.15cm}
-1+j \\
0 \\
\end{array} } \right]\right\rangle
=\left\langle \left[ {\begin{array}{cc}
\vspace{0.15cm}
-1+j \\
\vspace{0.15cm}
1+j \\
0 \\
\end{array} } \right]\right\rangle 
=\left\langle \left[ {\begin{array}{cc}
\vspace{0.15cm}
1-j \\
\vspace{0.15cm}
-1-j \\
0 \\
\end{array} } \right]\right\rangle
=\left\langle \left[ {\begin{array}{cc}
\vspace{0.15cm}
2j \\
\vspace{0.15cm}
2 \\
0 \\
\end{array} } \right]\right\rangle
=\left\langle \left[ {\begin{array}{cc}
\vspace{0.15cm}
-2j \\
\vspace{0.15cm}
-2 \\
0 \\
\end{array} } \right]\right\rangle
=\left\langle \left[ {\begin{array}{cc}
\vspace{0.15cm}
-2 \\
\vspace{0.15cm}
2j \\
0 \\
\end{array} } \right]\right\rangle 
=\left\langle \left[ {\begin{array}{cc}
\vspace{0.15cm}
2 \\
\vspace{0.15cm}
-2j \\
0 \\
\end{array} } \right]\right\rangle  $ ~~~~\\
$ 4.\left\langle \left[ {\begin{array}{cc}
\vspace{0.15cm}
1+j \\
\vspace{0.15cm}
-1+j \\
0 \\
\end{array} } \right]\right\rangle
=\left\langle \left[ {\begin{array}{cc}
\vspace{0.15cm}
-1-j \\
\vspace{0.15cm}
1-j \\
0 \\
\end{array} } \right]\right\rangle
=\left\langle \left[ {\begin{array}{cc}
\vspace{0.15cm}
-1+j \\
\vspace{0.15cm}
-1-j \\
0 \\
\end{array} } \right]\right\rangle 
=\left\langle \left[ {\begin{array}{cc}
\vspace{0.15cm}
1-j \\
\vspace{0.15cm}
1+j \\
0 \\
\end{array} } \right]\right\rangle 
=\left\langle \left[ {\begin{array}{cc}
\vspace{0.15cm}
2j \\
\vspace{0.15cm}
-2 \\
0 \\
\end{array} } \right]\right\rangle
=\left\langle \left[ {\begin{array}{cc}
\vspace{0.15cm}
-2j \\
\vspace{0.15cm}
2 \\
0 \\
\end{array} } \right]\right\rangle
=\left\langle \left[ {\begin{array}{cc}
\vspace{0.15cm}
2 \\
\vspace{0.15cm}
2j \\
0 \\
\end{array} } \right]\right\rangle 
=\left\langle \left[ {\begin{array}{cc}
\vspace{0.15cm}
-2 \\
\vspace{0.15cm}
-2j \\
0 \\
\end{array} } \right]\right\rangle $ ~~~~\\
$ 5. \left\langle \left[ {\begin{array}{cc}
\vspace{0.15cm}
1+j \\
\vspace{0.15cm}
2j \\
0 \\
\end{array} } \right]\right\rangle
=\left\langle \left[ {\begin{array}{cc}
\vspace{0.15cm}
-1-j \\
\vspace{0.15cm}
-2j \\
0 \\
\end{array} } \right]\right\rangle
=\left\langle \left[ {\begin{array}{cc}
\vspace{0.15cm}
-1+j \\
\vspace{0.15cm}
2 \\
0 \\
\end{array} } \right]\right\rangle 
=\left\langle \left[ {\begin{array}{cc}
\vspace{0.15cm}
1-j \\
\vspace{0.15cm}
-2 \\
0 \\
\end{array} } \right]\right\rangle $ ~~~~
$6. \left\langle \left[ {\begin{array}{cc}
\vspace{0.15cm}
1+j \\
\vspace{0.15cm}
-2j \\
0 \\
\end{array} } \right]\right\rangle
=\left\langle \left[ {\begin{array}{cc}
\vspace{0.15cm}
-1-j \\
\vspace{0.15cm}
2j \\
0 \\
\end{array} } \right]\right\rangle
=\left\langle \left[ {\begin{array}{cc}
\vspace{0.15cm}
-1+j \\
\vspace{0.15cm}
-2 \\
0 \\
\end{array} } \right]\right\rangle 
=\left\langle \left[ {\begin{array}{cc}
\vspace{0.15cm}
1-j \\
\vspace{0.15cm}
2 \\
0 \\
\end{array} } \right]\right\rangle $ \\
$ 7. \left\langle \left[ {\begin{array}{cc}
\vspace{0.15cm}
1+j \\
\vspace{0.15cm}
2 \\
0 \\
\end{array} } \right]\right\rangle
=\left\langle \left[ {\begin{array}{cc}
\vspace{0.15cm}
-1-j \\
\vspace{0.15cm}
-2 \\
0 \\
\end{array} } \right]\right\rangle
=\left\langle \left[ {\begin{array}{cc}
\vspace{0.15cm}
-1+j \\
\vspace{0.15cm}
2j \\
0 \\
\end{array} } \right]\right\rangle 
=\left\langle \left[ {\begin{array}{cc}
\vspace{0.15cm}
1-j \\
\vspace{0.15cm}
-2j \\
0 \\
\end{array} } \right]\right\rangle $ ~~~~
$8. \left\langle \left[ {\begin{array}{cc}
\vspace{0.15cm}
1+j \\
\vspace{0.15cm}
-2 \\
0 \\
\end{array} } \right]\right\rangle
=\left\langle \left[ {\begin{array}{cc}
\vspace{0.15cm}
-1-j \\
\vspace{0.15cm}
2 \\
0 \\
\end{array} } \right]\right\rangle
=\left\langle \left[ {\begin{array}{cc}
\vspace{0.15cm}
-1+j \\
\vspace{0.15cm}
-2j \\
0 \\
\end{array} } \right]\right\rangle 
=\left\langle \left[ {\begin{array}{cc}
\vspace{0.15cm}
1-j \\
\vspace{0.15cm}
2j \\
0 \\
\end{array} } \right]\right\rangle $ \\
$9. \left\langle \left[ {\begin{array}{cc}
\vspace{0.15cm}
2j \\
\vspace{0.15cm}
1+j \\
0 \\
\end{array} } \right]\right\rangle
=\left\langle \left[ {\begin{array}{cc}
\vspace{0.15cm}
-2j \\
\vspace{0.15cm}
-1-j \\
0 \\
\end{array} } \right]\right\rangle
=\left\langle \left[ {\begin{array}{cc}
\vspace{0.15cm}
2 \\
\vspace{0.15cm}
-1+j \\
0 \\
\end{array} } \right]\right\rangle 
=\left\langle \left[ {\begin{array}{cc}
\vspace{0.15cm}
-2 \\
\vspace{0.15cm}
1-j \\
0 \\
\end{array} } \right]\right\rangle $ ~~~~
$ 10. \left\langle \left[ {\begin{array}{cc}
\vspace{0.15cm}
-2j \\
\vspace{0.15cm}
1+j \\
0 \\
\end{array} } \right]\right\rangle
=\left\langle \left[ {\begin{array}{cc}
\vspace{0.15cm}
2j \\
\vspace{0.15cm}
-1-j \\
0 \\
\end{array} } \right]\right\rangle
=\left\langle \left[ {\begin{array}{cc}
\vspace{0.15cm}
-2 \\
\vspace{0.15cm}
-1+j \\
0 \\
\end{array} } \right]\right\rangle 
=\left\langle \left[ {\begin{array}{cc}
\vspace{0.15cm}
2 \\
\vspace{0.15cm}
1-j \\
0 \\
\end{array} } \right]\right\rangle $ \\
$11. \left\langle \left[ {\begin{array}{cc}
\vspace{0.15cm}
2 \\
\vspace{0.15cm}
1+j \\
0 \\
\end{array} } \right]\right\rangle
=\left\langle \left[ {\begin{array}{cc}
\vspace{0.15cm}
-2 \\
\vspace{0.15cm}
-1-j \\
0 \\
\end{array} } \right]\right\rangle
=\left\langle \left[ {\begin{array}{cc}
\vspace{0.15cm}
2j \\
\vspace{0.15cm}
-1+j \\
0 \\
\end{array} } \right]\right\rangle 
=\left\langle \left[ {\begin{array}{cc}
\vspace{0.15cm}
-2j \\
\vspace{0.15cm}
1-j \\
0 \\
\end{array} } \right]\right\rangle $ ~~~
$12. \left\langle \left[ {\begin{array}{cc}
\vspace{0.15cm}
-2 \\
\vspace{0.15cm}
1+j \\
0 \\
\end{array} } \right]\right\rangle
=\left\langle \left[ {\begin{array}{cc}
\vspace{0.15cm}
2 \\
\vspace{0.15cm}
-1-j \\
0 \\
\end{array} } \right]\right\rangle
=\left\langle \left[ {\begin{array}{cc}
\vspace{0.15cm}
-2j \\
\vspace{0.15cm}
-1+j \\
0 \\
\end{array} } \right]\right\rangle 
=\left\langle \left[ {\begin{array}{cc}
\vspace{0.15cm}
2j \\
\vspace{0.15cm}
1-j \\
0 \\
\end{array} } \right]\right\rangle. $
\label{fig:sfscase2}
\caption{Null Spaces of the Singular Fades Spaces for the case $x_A \neq x'_A, ~ x_B \neq x'_B \text{~and~} x_C=x'_C$.}
\end{figure*}

So, $\exists$ 12 singular fade subspaces for the subcase each being the null space of the above 12 spaces. Similarly, for each of the other two subcases, there are 12 singular fade subspaces, resulting in a total of 36 singular fade subspaces for the case.\\\\
\textbf{\textit{Case 3:}} In this case, $x_A \neq x'_A,~ x_B \neq x'_B \text{~and~} x_C \neq x'_C$. The singular fade subspace in this case is given by 
\begin{equation}
\label{scase3}
\mathcal{S}'''=\left\langle \left[ {\begin{array}{cc}
\vspace{0.15cm}
x_{A}-x'_A \\
\vspace{0.15cm}
x_B-x'_B \\
x_C-x'_C \\
\end{array} } \right]\right\rangle ^{\bot}.
\end{equation}

Here each of $x_A- x'_A,~ x_B-x'_B \text{~and~} x_C-x'_C \in \mathcal{D}$ can take eight non-zero values. There are therefore, 512 total possibilities for the vector $ \left[x_A-x'_A, ~x_B-x'_B, ~x_C-x'_C\right]^t $. In this case we end up with 112 singular fade subspaces. We now explain this.\\

\begin{lemma} For the case when $x_A- x'_A\neq 0,~  x_B-x'_B \neq 0 \text{~and~} x_C-x'_C \neq 0$, for a given vector $v= \left[x_A-x'_A, ~x_B-x'_B, ~x_C-x'_C\right]^t$ over $\mathcal{D}_1 \cup \mathcal{D}_2$, there are precisely 4 or 8 vectors (including $v$) over $\mathcal{D}_1 \cup \mathcal{D}_2$ that generate the same vector space over $\mathbb{C}$ as $v$.
\end{lemma}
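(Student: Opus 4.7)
The plan is to imitate the proof of Lemma 1 with one extra coordinate. Using the canonical form $2\sin(\pi n/4)e^{j\phi}$ for the nonzero elements of $\mathcal{D}_1\cup\mathcal{D}_2$ (with $n\in\{1,2\}$ and $\phi$ in one of two four-element cosets of $\{0,\pi/2,\pi,3\pi/2\}$ determined by the parity of $n$), I parametrize $v$ by $(n_i,\phi_i)_{i=1}^3$ and a candidate $w$ by $(n'_i,\phi'_i)_{i=1}^3$. The requirement that $w$ generates the same complex line as $v$ is $v=re^{j\theta}w$ for some nonzero $re^{j\theta}\in\mathbb{C}$, which produces three componentwise scalar identities.

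Dividing the first identity by the second and by the third and then taking moduli yields the two sine-ratio identities $\sin(\pi n_1/4)/\sin(\pi n_2/4)=\sin(\pi n'_1/4)/\sin(\pi n'_2/4)$ and $\sin(\pi n_1/4)/\sin(\pi n_3/4)=\sin(\pi n'_1/4)/\sin(\pi n'_3/4)$. By the injectivity argument cited from \cite{NMR} in the proof of Lemma 1, each such identity forces a matching of the $n$-parameters, giving $n'_i=n_i$ for $i=1,2,3$; hence the magnitudes of the coordinates of $w$ are completely determined by those of $v$. The three scalar identities then reduce, after cancelling the common factor $re^{j\theta}$, to the two phase congruences $\phi'_1-\phi'_2\equiv\phi_1-\phi_2$ and $\phi'_1-\phi'_3\equiv\phi_1-\phi_3$ modulo $2\pi$.

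It remains to count admissible triples $(\phi'_1,\phi'_2,\phi'_3)$, which I would do by a case analysis on the multiset $\{n_1,n_2,n_3\}$. When all three $n_i$ agree, every coordinate of $w$ lives in the same parity-determined phase lattice, and rotating $w$ by any common multiple of $\pi/2$ preserves admissibility; this extra freedom combined with the two difference congruences yields exactly $8$ valid triples, paralleling the $k_1=k_2$ subcase of Lemma 1. When the $n_i$ are not all equal, the coordinate-wise phase lattices are offset by $\pi/4$ along at least one axis, killing that extra freedom, and a direct enumeration leaves exactly $4$ valid triples, paralleling the $k_1\neq k_2$ subcase. This gives the claimed dichotomy of $4$ or $8$ vectors.

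The main obstacle is the bookkeeping in this final case split: one must check that two simultaneous phase-difference congruences on a three-coordinate phase lattice with parity-mixed axes really do cut the solution count down to $4$, and that the fully uniform case genuinely recovers the factor-of-$2$ jump to $8$. Beyond the canonical parametrization of $\mathcal{D}_1\cup\mathcal{D}_2$ and the injectivity of $n\mapsto\sin(\pi n/4)$ on $\{1,2\}$, no new conceptual ingredient is required; the rest is a clean transcription of the Lemma 1 template with one added coordinate.
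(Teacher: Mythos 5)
Your overall route is the same as the paper's: write each nonzero coordinate as $2\sin(\pi n/4)e^{j\phi}$ with $n\in\{1,2\}$, impose $v=re^{j\theta}w$, divide the componentwise identities pairwise and take moduli to constrain the magnitude parameters, then count admissible phase triples. But there is a genuine error at the step where you conclude that the two sine-ratio identities force $n'_i=n_i$ for $i=1,2,3$, so that ``the magnitudes of the coordinates of $w$ are completely determined by those of $v$.'' The identity $\sin(\pi n_1/4)/\sin(\pi n_2/4)=\sin(\pi n'_1/4)/\sin(\pi n'_2/4)$ with entries in $\{1,2\}$ pins down $(n'_1,n'_2)=(n_1,n_2)$ only when $n_1\neq n_2$, i.e.\ when the ratio differs from $1$. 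When $n_1=n_2=n_3$ every ratio equals $1$ and the identities only force $n'_1=n'_2=n'_3$, leaving \emph{two} choices for the common value: $w$ may have all coordinates in $\mathcal{D}_1$ or all in $\mathcal{D}_2$, because the scalar $re^{j\theta}$ can absorb a uniform magnitude change (e.g.\ $[1+j,1+j,1+j]^t$ and $[2j,2j,2j]^t$ span the same complex line). This two-fold choice of magnitude class, multiplied by the four admissible phase triples, is exactly what yields the count of $8$ in the uniform case; that is how the paper's proof (its Case 1, $k_1=k_2=k_3$) obtains it.

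Because of this, your accounting of the $4$-versus-$8$ dichotomy does not work. You attribute the jump to $8$ to ``rotating $w$ by any common multiple of $\pi/2$,'' but those rotations are the four scalars $\{\pm1,\pm j\}$, which are precisely the source of the four phase triples present in \emph{both} cases (multiplication by $j$ preserves $\mathcal{D}_1$ and $\mathcal{D}_2$ separately, so this freedom is not ``killed'' in the mixed case either); it cannot double anything. Taken literally, your chain of deductions ($n'_i=n_i$ always, then four phase triples) returns the answer $4$ in every case and fails to prove the lemma. The repair is the paper's case split: if all three magnitude parameters coincide, the same line is generated by $2\times 4=8$ vectors over $\mathcal{D}_1\cup\mathcal{D}_2$ (the $\{\pm1,\pm j\}$ multiples of $v$ together with the four vectors obtained by switching the common magnitude class, i.e.\ the odd-multiple-of-$\pi/4$ rotations combined with scaling by $\sqrt{2}^{\pm1}$); otherwise the magnitude parameters are forced and only the $1\times 4=4$ multiples by $\{\pm1,\pm j\}$ remain.
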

\begin{proof}
%The proof for this lemma is similar to \textit{Lemma 1} and is omitted. Here, the vector space generated by $v$ can be generated by exactly four other vectors over $\mathcal{D}_1 \cup \mathcal{D}_2$ if one or two of $x_A- x'_A,~  x_B-x'_B \text{~and~} x_C-x'_C $ belong to $\mathcal{D}_2$; and the vector space generated by $v$ can be generated by exactly eight other vectors over $\mathcal{D}_1 \cup \mathcal{D}_2$ if all of $x_A- x'_A,~  x_B-x'_B \text{~and~} x_C-x'_C $ belong to either $\mathcal{D}_1$ or $\mathcal{D}_2$.\\
As mentioned in the proof of Lemma 1, from \cite{NMR},
\begin{align}
\nonumber
\mathcal{D}=\Delta \mathcal{S}=\left\{0\right\} & \cup \left\{2sin\left( \pi n / 4\right) e^{j k \pi / 2} | n \text{~odd}\right\}\\
\nonumber
&\cup \left\{2sin\left( \pi n / 4\right) e^{j\left( k \pi / 2 + \pi / 4\right)} | n \text{~even}\right\},
\end{align}
where $ 1 \leq n \leq 2 $ and $ 0 \leq k \leq 3 $. Therefore, we can write,
$$v= \left[ {\begin{array}{cc}
\vspace{0.15cm}
x_{A}-x'_A \\
\vspace{0.15cm}
x_B-x'_B \\
x_C-x'_C \\
\end{array} } \right] = \left[ {\begin{array}{cc}
\vspace{0.15cm}
2 sin \frac{ \pi k_{1}}{4} e^{j \phi_{1}} \\
\vspace{0.15cm}
2 sin \frac{ \pi k_{2}}{4} e^{j \phi_{2}} \\
2 sin \frac{ \pi k_{3}}{4} e^{j \phi_{3}} \\
\end{array} } \right] $$
where $\phi_{i}=k_{i} \pi /2$ if $k_{i}$ is odd and $\phi_{i}=k_{i} \pi /2 + \pi/4$ if $k_{i}$ is even. \\

A vector $w$ over $\mathcal{D}_1 \cup \mathcal{D}_2$ shall generate the same vector space over $\mathbb{C}$ iff $w$ is a scalar multiple of $v$, i.e. for some complex number $r e^{j \theta} \in \mathbb{C}$, 
$$v = r e^{j\theta} w \Rightarrow \left[ {\begin{array}{cc}
\vspace{0.15cm}
2 sin \frac{ \pi k_{1}}{4} e^{j \phi_{1}} \\
\vspace{0.15cm}
2 sin \frac{ \pi k_{2}}{4} e^{j \phi_{2}} \\
2 sin \frac{ \pi k_{3}}{4} e^{j \phi_{3}} \\
\end{array} } \right] = r e^{j\theta} \left[ {\begin{array}{cc}
\vspace{0.15cm}
2 sin \frac{ \pi k_{4}}{4} e^{j \phi_{4}} \\
\vspace{0.15cm}
2 sin \frac{ \pi k_{5}}{4} e^{j \phi_{5}} \\
2 sin \frac{ \pi k_{6}}{4} e^{j \phi_{6}} \\
\end{array} } \right] $$ 
where for $i=4,5,6$ $\phi_{i}=k_{i} \pi /2$ if $k_{i}$ is odd and $\phi_{i}=k_{i} \pi /2 + \pi/4$ if $k_{i}$ is even. \\

Then,
\begin{equation}
\label{firstcomp2}
2 sin \frac{ \pi k_{1}}{4} e^{j \phi_{1}} = r e^{j\theta} \times 2 sin \frac{ \pi k_{4}}{4} e^{j \phi_{4}},
\end{equation}
\begin{equation}
\label{secondcomp2}
2 sin \frac{ \pi k_{2}}{4} e^{j \phi_{2}} = r e^{j\theta} \times 2 sin \frac{ \pi k_{5}}{4} e^{j \phi_{5}}
\end{equation}
and,
\begin{equation}
\label{thirdcomp2}
2 sin \frac{ \pi k_{3}}{4} e^{j \phi_{3}} = r e^{j\theta} \times 2 sin \frac{ \pi k_{6}}{4} e^{j \phi_{6}}
\end{equation}

Dividing (\ref{firstcomp2}) by (\ref{secondcomp2}) and taking modulus of both sides, we get 
\begin{equation}
\label{rel2}
\frac{sin \frac{ \pi k_{1}}{4}}{sin \frac{ \pi k_{2}}{4}}  =  \frac{sin \frac{ \pi k_{4}}{4} }{sin \frac{ \pi k_{5}}{4} }
\end{equation}
As shown in \cite{NMR}, this is possible only if $k_1=k_4$ and $k_2=k_5$. Similarly, dividing (\ref{firstcomp2}) by (\ref{thirdcomp2}) and taking modulus of both sides, we get 
\begin{equation}
\label{rel3}
\frac{sin \frac{ \pi k_{1}}{4}}{sin \frac{ \pi k_{3}}{4}}  =  \frac{sin \frac{ \pi k_{4}}{4} }{sin \frac{ \pi k_{6}}{4} }
\end{equation}
As shown in \cite{NMR}, this is possible only if $k_1=k_4$ and $k_3=k_6$. Also, from (\ref{firstcomp2}) and (\ref{secondcomp2}) we have 
\begin{equation}
\label{div2}
\frac{ sin \frac{ \pi k_{1}}{4}}{sin \frac{ \pi k_{2}}{4}} e^{j (\phi_{1} - \phi_{2})} = \frac{ sin \frac{ \pi k_{4}}{4}}{sin \frac{ \pi k_{5}}{4}} e^{j (\phi_{4} - \phi_{5})}
\end{equation}
From (\ref{rel2}) and (\ref{div2}), we have 
\begin{equation}
\label{div3}
e^{j (\phi_{1} - \phi_{2})} =  e^{j (\phi_{4} - \phi_{5})}
\end{equation}
Similarly, 
\begin{equation}
\label{div4}
e^{j (\phi_{1} - \phi_{3})} =  e^{j (\phi_{4} - \phi_{6})}
\end{equation}
In (\ref{div3}) and (\ref{div4}), the LHS is fixed. It therefore suffices to compute the number of values that RHS takes for fixed LHS in the two equations. It can be verified, that for a fixed value of $\phi_{1},~ \phi_{2} \text{~and~} \phi_3$, there are precisely four set of values of $\phi_{4},~ \phi_{5} \text{~and~} \phi_6$ that result in the same value of $\phi_{1} - \phi_{2}$ and $\phi_{1} - \phi_{3}$. We now look at the following possibilities:\\
\textit{Case 1:} $k_1=k_2=k_3$. 
Then, $k_4=k_5=k_6$, i.e., there are exactly two possibilities for $k_4$, $k_5$ and $k_6$, viz., $k_4=k_5=k_6=1$ and $k_4=k_5=k_6=2$. With two sets of values for $k_4$, $k_5$ and $k_6$ and four sets of values for $\phi_{4} $, $\phi_{5}$ and $\phi_6$, we have a total of eight set of values that $w$ can take. Hence, in this case, the vector space generated by $v$ can be generated by exactly eight other vectors over $\mathcal{D}_1 \cup \mathcal{D}_2$. 
\textit{Case 2:} Atleast one of $k_1 \neq k_2$, $k_1 \neq k_3$, or $k_2 \neq k_3$. 
Then, $k_4 \neq k_5$, $k_4 \neq k_6$, or $k_5 \neq k_6$, so that, using (\ref{rel2}) and (\ref{rel3}), there is precisely one possibility for $k_4,~k_5,~ k_6$, viz., $k_4=k_1$, $k_5=k_2$ and $k_6=k_3$. With only one possible set of values for $k_4,~k_5$ and $k_6$ and four sets of values for $\phi_{4}, ~ \phi_5 $ and $\phi_{6}$, we have a total of four set of values that $w$ can take. Hence, in this case, the vector space generated by $v$ can be generated by exactly four other vectors over $\mathcal{D}_1 \cup \mathcal{D}_2$. \\
\end{proof}

Since all of $x_A- x'_A,~ x_B-x'_B \text{~and~} x_C-x'_C \in \mathcal{D}$ are non-zero, we can say that $$x_A- x'_A,~ x_B-x'_B \text{~and~} x_C-x'_C \in \mathcal{D}_{1} \cup \mathcal{D}_2.$$ As a result, we have the following three subcases:
\begin{enumerate}
 \item One of $x_A- x'_A,~ x_B-x'_B \text{~and~} x_C-x'_C \in \mathcal{D}_1$
 \item Two of $x_A- x'_A,~ x_B-x'_B \text{~and~} x_C-x'_C \in \mathcal{D}_1$
 \item All of $x_A- x'_A,~ x_B-x'_B \text{~and~} x_C-x'_C \in \mathcal{D}_1$
\end{enumerate}
We deal with each one of the subcases one-by-one. \\\\
\textit{Subcase 1:} One of $x_A- x'_A,~ x_B-x'_B \text{~and~} x_C-x'_C \in \mathcal{D}_1$. Without loss of generality, we assume that $x_A-x'_A \in \mathcal{D}_1$ and $x_B-x'_B, ~ x_C-x'_C \in \mathcal{D}_2$. The singular fade subspace for the case is given by (\ref{scase3}). There are 64 possibilities for the vector $v'=\left[x_{A}-x'_A, ~ x_B-x'_B, ~ x_C-x'_C\right]^{t}$. But some of the possibilities may generate the same vector space over $\mathbb{C}$. There are precisely 4 vectors of length 3 over $\mathcal{D}_{1} \cup \mathcal{D}_2$, the $\left\{\pm1, \pm j \right\}$ scalar multiples of the vector. As a result, the case $x_A-x'_A \in \mathcal{D}_1$ and $x_B-x'_B, ~ x_C-x'_C \in \mathcal{D}_2$ leads to 16 singular fade subspaces as shown in Figure \ref{fig:sfssubcase1}. The same holds for the case when $x_B-x'_B \in \mathcal{D}_1$ and $x_A-x'_A, ~ x_C-x'_C \in \mathcal{D}_2$, or when $x_C-x'_C \in \mathcal{D}_1$ and $x_A-x'_A, ~ x_B-x'_B \in \mathcal{D}_2$. This subcase therefore results in 48 singular fade subspaces.\\

\begin{figure*}
\centering
\tiny
$ 1. \left\langle \left[ {\begin{array}{cc}
\vspace{0.15cm}
1+j \\
\vspace{0.15cm}
2j \\
2j \\
\end{array} } \right]\right\rangle
=\left\langle \left[ {\begin{array}{cc}
\vspace{0.15cm}
-1-j \\
\vspace{0.15cm}
-2j \\
-2j \\
\end{array} } \right]\right\rangle
=\left\langle \left[ {\begin{array}{cc}
\vspace{0.15cm}
-1+j \\
\vspace{0.15cm}
-2 \\
-2 \\
\end{array} } \right]\right\rangle 
=\left\langle \left[ {\begin{array}{cc}
\vspace{0.15cm}
1-j \\
\vspace{0.15cm}
2 \\
2 \\
\end{array} } \right]\right\rangle $ ~~~~
$ 2. \left\langle \left[ {\begin{array}{cc}
\vspace{0.15cm}
1+j \\
\vspace{0.15cm}
2j \\
-2j \\
\end{array} } \right]\right\rangle
=\left\langle \left[ {\begin{array}{cc}
\vspace{0.15cm}
-1-j \\
\vspace{0.15cm}
-2j \\
2j \\
\end{array} } \right]\right\rangle
=\left\langle \left[ {\begin{array}{cc}
\vspace{0.15cm}
-1+j \\
\vspace{0.15cm}
-2 \\
2 \\
\end{array} } \right]\right\rangle 
=\left\langle \left[ {\begin{array}{cc}
\vspace{0.15cm}
1-j \\
\vspace{0.15cm}
2 \\
-2 \\
\end{array} } \right]\right\rangle $ \\
$ 3. \left\langle \left[ {\begin{array}{cc}
\vspace{0.15cm}
1+j \\
\vspace{0.15cm}
2j \\
2 \\
\end{array} } \right]\right\rangle
=\left\langle \left[ {\begin{array}{cc}
\vspace{0.15cm}
-1-j \\
\vspace{0.15cm}
-2j \\
-2 \\
\end{array} } \right]\right\rangle
=\left\langle \left[ {\begin{array}{cc}
\vspace{0.15cm}
-1+j \\
\vspace{0.15cm}
-2 \\
2j \\
\end{array} } \right]\right\rangle 
=\left\langle \left[ {\begin{array}{cc}
\vspace{0.15cm}
1-j \\
\vspace{0.15cm}
2 \\
-2j \\
\end{array} } \right]\right\rangle $ ~~~~
$ 4. \left\langle \left[ {\begin{array}{cc}
\vspace{0.15cm}
1+j \\
\vspace{0.15cm}
2j \\
-2 \\
\end{array} } \right]\right\rangle
=\left\langle \left[ {\begin{array}{cc}
\vspace{0.15cm}
-1-j \\
\vspace{0.15cm}
-2j \\
2 \\
\end{array} } \right]\right\rangle
=\left\langle \left[ {\begin{array}{cc}
\vspace{0.15cm}
-1+j \\
\vspace{0.15cm}
-2 \\
-2j \\
\end{array} } \right]\right\rangle 
=\left\langle \left[ {\begin{array}{cc}
\vspace{0.15cm}
1-j \\
\vspace{0.15cm}
2 \\
2j \\
\end{array} } \right]\right\rangle $ \\
$ 5. \left\langle \left[ {\begin{array}{cc}
\vspace{0.15cm}
1+j \\
\vspace{0.15cm}
-2j \\
2j \\
\end{array} } \right]\right\rangle
=\left\langle \left[ {\begin{array}{cc}
\vspace{0.15cm}
-1-j \\
\vspace{0.15cm}
2j \\
-2j \\
\end{array} } \right]\right\rangle
=\left\langle \left[ {\begin{array}{cc}
\vspace{0.15cm}
-1+j \\
\vspace{0.15cm}
2 \\
-2 \\
\end{array} } \right]\right\rangle 
=\left\langle \left[ {\begin{array}{cc}
\vspace{0.15cm}
1-j \\
\vspace{0.15cm}
-2 \\
2 \\
\end{array} } \right]\right\rangle $ ~~~~
$ 6. \left\langle \left[ {\begin{array}{cc}
\vspace{0.15cm}
1+j \\
\vspace{0.15cm}
-2j \\
-2j \\
\end{array} } \right]\right\rangle
=\left\langle \left[ {\begin{array}{cc}
\vspace{0.15cm}
-1-j \\
\vspace{0.15cm}
2j \\
2j \\
\end{array} } \right]\right\rangle
=\left\langle \left[ {\begin{array}{cc}
\vspace{0.15cm}
-1+j \\
\vspace{0.15cm}
2 \\
2 \\
\end{array} } \right]\right\rangle 
=\left\langle \left[ {\begin{array}{cc}
\vspace{0.15cm}
1-j \\
\vspace{0.15cm}
-2 \\
-2 \\
\end{array} } \right]\right\rangle $ \\
$ 7. \left\langle \left[ {\begin{array}{cc}
\vspace{0.15cm}
2 1+j\\
\vspace{0.15cm}
-2j \\
2 \\
\end{array} } \right]\right\rangle
=\left\langle \left[ {\begin{array}{cc}
\vspace{0.15cm}
-1-j \\
\vspace{0.15cm}
2j \\
-2 \\
\end{array} } \right]\right\rangle
=\left\langle \left[ {\begin{array}{cc}
\vspace{0.15cm}
-1+j \\
\vspace{0.15cm}
2 \\
2j \\
\end{array} } \right]\right\rangle 
=\left\langle \left[ {\begin{array}{cc}
\vspace{0.15cm}
1-j \\
\vspace{0.15cm}
-2 \\
-2j \\
\end{array} } \right]\right\rangle $ ~~~
$ 8. \left\langle \left[ {\begin{array}{cc}
\vspace{0.15cm}
1+j \\
\vspace{0.15cm}
-2j \\
-2 \\
\end{array} } \right]\right\rangle
=\left\langle \left[ {\begin{array}{cc}
\vspace{0.15cm}
-1-j \\
\vspace{0.15cm}
2j \\
2 \\
\end{array} } \right]\right\rangle
=\left\langle \left[ {\begin{array}{cc}
\vspace{0.15cm}
-1+j \\
\vspace{0.15cm}
2 \\
-2j \\
\end{array} } \right]\right\rangle 
=\left\langle \left[ {\begin{array}{cc}
\vspace{0.15cm}
1-j \\
\vspace{0.15cm}
-2 \\
2j \\
\end{array} } \right]\right\rangle. $\\
$ 9. \left\langle \left[ {\begin{array}{cc}
\vspace{0.15cm}
1+j \\
\vspace{0.15cm}
2 \\
2j \\
\end{array} } \right]\right\rangle
=\left\langle \left[ {\begin{array}{cc}
\vspace{0.15cm}
-1-j \\
\vspace{0.15cm}
-2 \\
-2j \\
\end{array} } \right]\right\rangle
=\left\langle \left[ {\begin{array}{cc}
\vspace{0.15cm}
-1+j \\
\vspace{0.15cm}
2j \\
-2 \\
\end{array} } \right]\right\rangle 
=\left\langle \left[ {\begin{array}{cc}
\vspace{0.15cm}
1-j \\
\vspace{0.15cm}
-2j \\
2 \\
\end{array} } \right]\right\rangle $ ~~~~
$ 10. \left\langle \left[ {\begin{array}{cc}
\vspace{0.15cm}
1+j \\
\vspace{0.15cm}
2 \\
-2j \\
\end{array} } \right]\right\rangle
=\left\langle \left[ {\begin{array}{cc}
\vspace{0.15cm}
-1-j \\
\vspace{0.15cm}
-2 \\
2j \\
\end{array} } \right]\right\rangle
=\left\langle \left[ {\begin{array}{cc}
\vspace{0.15cm}
-1+j \\
\vspace{0.15cm}
2j \\
2 \\
\end{array} } \right]\right\rangle 
=\left\langle \left[ {\begin{array}{cc}
\vspace{0.15cm}
1-j \\
\vspace{0.15cm}
-2j \\
-2 \\
\end{array} } \right]\right\rangle $ \\
$ 11. \left\langle \left[ {\begin{array}{cc}
\vspace{0.15cm}
1+j \\
\vspace{0.15cm}
2 \\
2 \\
\end{array} } \right]\right\rangle
=\left\langle \left[ {\begin{array}{cc}
\vspace{0.15cm}
-1-j \\
\vspace{0.15cm}
-2 \\
-2 \\
\end{array} } \right]\right\rangle
=\left\langle \left[ {\begin{array}{cc}
\vspace{0.15cm}
-1+j \\
\vspace{0.15cm}
2j \\
2j \\
\end{array} } \right]\right\rangle 
=\left\langle \left[ {\begin{array}{cc}
\vspace{0.15cm}
1-j \\
\vspace{0.15cm}
-2j \\
-2j \\
\end{array} } \right]\right\rangle $ ~~~~
$ 12. \left\langle \left[ {\begin{array}{cc}
\vspace{0.15cm}
1+j \\
\vspace{0.15cm}
2 \\
-2 \\
\end{array} } \right]\right\rangle
=\left\langle \left[ {\begin{array}{cc}
\vspace{0.15cm}
-1-j \\
\vspace{0.15cm}
-2 \\
2 \\
\end{array} } \right]\right\rangle
=\left\langle \left[ {\begin{array}{cc}
\vspace{0.15cm}
-1+j \\
\vspace{0.15cm}
2j \\
-2j \\
\end{array} } \right]\right\rangle 
=\left\langle \left[ {\begin{array}{cc}
\vspace{0.15cm}
1-j \\
\vspace{0.15cm}
-2j \\
2j \\
\end{array} } \right]\right\rangle $ \\
$ 13. \left\langle \left[ {\begin{array}{cc}
\vspace{0.15cm}
1+j \\
\vspace{0.15cm}
-2 \\
2j \\
\end{array} } \right]\right\rangle
=\left\langle \left[ {\begin{array}{cc}
\vspace{0.15cm}
-1-j \\
\vspace{0.15cm}
2 \\
-2j \\
\end{array} } \right]\right\rangle
=\left\langle \left[ {\begin{array}{cc}
\vspace{0.15cm}
-1+j \\
\vspace{0.15cm}
-2j \\
-2 \\
\end{array} } \right]\right\rangle 
=\left\langle \left[ {\begin{array}{cc}
\vspace{0.15cm}
1-j \\
\vspace{0.15cm}
2j \\
2 \\
\end{array} } \right]\right\rangle $ ~~~~
$ 14. \left\langle \left[ {\begin{array}{cc}
\vspace{0.15cm}
1+j \\
\vspace{0.15cm}
-2 \\
2j \\
\end{array} } \right]\right\rangle
=\left\langle \left[ {\begin{array}{cc}
\vspace{0.15cm}
-1-j \\
\vspace{0.15cm}
2 \\
-2j \\
\end{array} } \right]\right\rangle
=\left\langle \left[ {\begin{array}{cc}
\vspace{0.15cm}
-1+j \\
\vspace{0.15cm}
-2j \\
-2 \\
\end{array} } \right]\right\rangle 
=\left\langle \left[ {\begin{array}{cc}
\vspace{0.15cm}
1-j \\
\vspace{0.15cm}
2j \\
2 \\
\end{array} } \right]\right\rangle $ \\
$ 15. \left\langle \left[ {\begin{array}{cc}
\vspace{0.15cm}
1+j \\
\vspace{0.15cm}
-2 \\
2 \\
\end{array} } \right]\right\rangle
=\left\langle \left[ {\begin{array}{cc}
\vspace{0.15cm}
-1-j \\
\vspace{0.15cm}
2 \\
-2 \\
\end{array} } \right]\right\rangle
=\left\langle \left[ {\begin{array}{cc}
\vspace{0.15cm}
-1+j \\
\vspace{0.15cm}
-2j \\
2j \\
\end{array} } \right]\right\rangle 
=\left\langle \left[ {\begin{array}{cc}
\vspace{0.15cm}
1-j \\
\vspace{0.15cm}
2j \\
-2j \\
\end{array} } \right]\right\rangle $ ~~~
$ 16. \left\langle \left[ {\begin{array}{cc}
\vspace{0.15cm}
1+j \\
\vspace{0.15cm}
-2 \\
-2 \\
\end{array} } \right]\right\rangle
=\left\langle \left[ {\begin{array}{cc}
\vspace{0.15cm}
-1-j \\
\vspace{0.15cm}
2 \\
2 \\
\end{array} } \right]\right\rangle
=\left\langle \left[ {\begin{array}{cc}
\vspace{0.15cm}
-1+j \\
\vspace{0.15cm}
-2j \\
-2j \\
\end{array} } \right]\right\rangle 
=\left\langle \left[ {\begin{array}{cc}
\vspace{0.15cm}
1-j \\
\vspace{0.15cm}
2j \\
2j \\
\end{array} } \right]\right\rangle. $
\label{fig:sfssubcase1}
\caption{Null Spaces of the Singular Fades Spaces for the case $x_A - x'_A \in \mathcal{D}_1$ and $x_B-x'_B, ~ x_C-x'_C \in \mathcal{D}_2$.}
\end{figure*}

\begin{figure*}
\centering
\tiny
$ 1. \left\langle \left[ {\begin{array}{cc}
\vspace{0.15cm}
1+j \\
\vspace{0.15cm}
1+j \\
2j \\
\end{array} } \right]\right\rangle
=\left\langle \left[ {\begin{array}{cc}
\vspace{0.15cm}
-1-j \\
\vspace{0.15cm}
-1-j \\
-2j \\
\end{array} } \right]\right\rangle
=\left\langle \left[ {\begin{array}{cc}
\vspace{0.15cm}
-1+j \\
\vspace{0.15cm}
-1+j \\
-2 \\
\end{array} } \right]\right\rangle 
=\left\langle \left[ {\begin{array}{cc}
\vspace{0.15cm}
1-j \\
\vspace{0.15cm}
1-j \\
2 \\
\end{array} } \right]\right\rangle $ ~~~~
$ 2. \left\langle \left[ {\begin{array}{cc}
\vspace{0.15cm}
1+j \\
\vspace{0.15cm}
1+j \\
-2j \\
\end{array} } \right]\right\rangle
=\left\langle \left[ {\begin{array}{cc}
\vspace{0.15cm}
-1-j \\
\vspace{0.15cm}
-1-j \\
2j \\
\end{array} } \right]\right\rangle
=\left\langle \left[ {\begin{array}{cc}
\vspace{0.15cm}
-1+j \\
\vspace{0.15cm}
-1+j \\
2 \\
\end{array} } \right]\right\rangle 
=\left\langle \left[ {\begin{array}{cc}
\vspace{0.15cm}
1-j \\
\vspace{0.15cm}
1-j \\
-2 \\
\end{array} } \right]\right\rangle $ \\
$ 3. \left\langle \left[ {\begin{array}{cc}
\vspace{0.15cm}
1+j \\
\vspace{0.15cm}
1+j \\
2 \\
\end{array} } \right]\right\rangle
=\left\langle \left[ {\begin{array}{cc}
\vspace{0.15cm}
-1-j \\
\vspace{0.15cm}
-1-j \\
-2 \\
\end{array} } \right]\right\rangle
=\left\langle \left[ {\begin{array}{cc}
\vspace{0.15cm}
-1+j \\
\vspace{0.15cm}
-1+j \\
2j \\
\end{array} } \right]\right\rangle 
=\left\langle \left[ {\begin{array}{cc}
\vspace{0.15cm}
1-j \\
\vspace{0.15cm}
1-j \\
-2j \\
\end{array} } \right]\right\rangle $ ~~~~
$ 4. \left\langle \left[ {\begin{array}{cc}
\vspace{0.15cm}
1+j \\
\vspace{0.15cm}
1+j \\
-2 \\
\end{array} } \right]\right\rangle
=\left\langle \left[ {\begin{array}{cc}
\vspace{0.15cm}
-1-j \\
\vspace{0.15cm}
-1-j \\
2 \\
\end{array} } \right]\right\rangle
=\left\langle \left[ {\begin{array}{cc}
\vspace{0.15cm}
-1+j \\
\vspace{0.15cm}
-1+j \\
-2j \\
\end{array} } \right]\right\rangle 
=\left\langle \left[ {\begin{array}{cc}
\vspace{0.15cm}
1-j \\
\vspace{0.15cm}
1-j \\
2j \\
\end{array} } \right]\right\rangle $ \\
$ 5. \left\langle \left[ {\begin{array}{cc}
\vspace{0.15cm}
1+j \\
\vspace{0.15cm}
-1-j \\
2j \\
\end{array} } \right]\right\rangle
=\left\langle \left[ {\begin{array}{cc}
\vspace{0.15cm}
-1-j \\
\vspace{0.15cm}
1+j \\
-2j \\
\end{array} } \right]\right\rangle
=\left\langle \left[ {\begin{array}{cc}
\vspace{0.15cm}
-1+j \\
\vspace{0.15cm}
1-j \\
-2 \\
\end{array} } \right]\right\rangle 
=\left\langle \left[ {\begin{array}{cc}
\vspace{0.15cm}
1-j \\
\vspace{0.15cm}
-1+j \\
2 \\
\end{array} } \right]\right\rangle $ ~~~~
$ 6. \left\langle \left[ {\begin{array}{cc}
\vspace{0.15cm}
1+j \\
\vspace{0.15cm}
-1-j \\
-2j \\
\end{array} } \right]\right\rangle
=\left\langle \left[ {\begin{array}{cc}
\vspace{0.15cm}
-1-j \\
\vspace{0.15cm}
1+j \\
2j \\
\end{array} } \right]\right\rangle
=\left\langle \left[ {\begin{array}{cc}
\vspace{0.15cm}
-1+j \\
\vspace{0.15cm}
1-j \\
2 \\
\end{array} } \right]\right\rangle 
=\left\langle \left[ {\begin{array}{cc}
\vspace{0.15cm}
1-j \\
\vspace{0.15cm}
-1+j \\
-2 \\
\end{array} } \right]\right\rangle $ \\
$ 7. \left\langle \left[ {\begin{array}{cc}
\vspace{0.15cm}
1+j\\
\vspace{0.15cm}
-1-j \\
2 \\
\end{array} } \right]\right\rangle
=\left\langle \left[ {\begin{array}{cc}
\vspace{0.15cm}
-1-j \\
\vspace{0.15cm}
1+j \\
-2 \\
\end{array} } \right]\right\rangle
=\left\langle \left[ {\begin{array}{cc}
\vspace{0.15cm}
-1+j \\
\vspace{0.15cm}
1-j \\
2j \\
\end{array} } \right]\right\rangle 
=\left\langle \left[ {\begin{array}{cc}
\vspace{0.15cm}
1-j \\
\vspace{0.15cm}
-1+j \\
-2j \\
\end{array} } \right]\right\rangle $ ~~~
$ 8. \left\langle \left[ {\begin{array}{cc}
\vspace{0.15cm}
1+j \\
\vspace{0.15cm}
-1-j \\
-2 \\
\end{array} } \right]\right\rangle
=\left\langle \left[ {\begin{array}{cc}
\vspace{0.15cm}
-1-j \\
\vspace{0.15cm}
1+j \\
2 \\
\end{array} } \right]\right\rangle
=\left\langle \left[ {\begin{array}{cc}
\vspace{0.15cm}
-1+j \\
\vspace{0.15cm}
1-j \\
-2j \\
\end{array} } \right]\right\rangle 
=\left\langle \left[ {\begin{array}{cc}
\vspace{0.15cm}
1-j \\
\vspace{0.15cm}
-1+j \\
2j \\
\end{array} } \right]\right\rangle. $\\
$ 9. \left\langle \left[ {\begin{array}{cc}
\vspace{0.15cm}
1+j \\
\vspace{0.15cm}
1-j \\
2j \\
\end{array} } \right]\right\rangle
=\left\langle \left[ {\begin{array}{cc}
\vspace{0.15cm}
-1-j \\
\vspace{0.15cm}
-1+j \\
-2j \\
\end{array} } \right]\right\rangle
=\left\langle \left[ {\begin{array}{cc}
\vspace{0.15cm}
-1+j \\
\vspace{0.15cm}
1+j \\
-2 \\
\end{array} } \right]\right\rangle 
=\left\langle \left[ {\begin{array}{cc}
\vspace{0.15cm}
1-j \\
\vspace{0.15cm}
-1-j \\
2 \\
\end{array} } \right]\right\rangle $ ~~~~
$ 10. \left\langle \left[ {\begin{array}{cc}
\vspace{0.15cm}
1+j \\
\vspace{0.15cm}
1-j \\
-2j \\
\end{array} } \right]\right\rangle
=\left\langle \left[ {\begin{array}{cc}
\vspace{0.15cm}
-1-j \\
\vspace{0.15cm}
-1+j \\
2j \\
\end{array} } \right]\right\rangle
=\left\langle \left[ {\begin{array}{cc}
\vspace{0.15cm}
-1+j \\
\vspace{0.15cm}
1+j \\
2 \\
\end{array} } \right]\right\rangle 
=\left\langle \left[ {\begin{array}{cc}
\vspace{0.15cm}
1-j \\
\vspace{0.15cm}
-1-j \\
-2 \\
\end{array} } \right]\right\rangle $ \\
$ 11. \left\langle \left[ {\begin{array}{cc}
\vspace{0.15cm}
1+j \\
\vspace{0.15cm}
1-j \\
2 \\
\end{array} } \right]\right\rangle
=\left\langle \left[ {\begin{array}{cc}
\vspace{0.15cm}
-1-j \\
\vspace{0.15cm}
-1+j \\
-2 \\
\end{array} } \right]\right\rangle
=\left\langle \left[ {\begin{array}{cc}
\vspace{0.15cm}
-1+j \\
\vspace{0.15cm}
1+j \\
2j \\
\end{array} } \right]\right\rangle 
=\left\langle \left[ {\begin{array}{cc}
\vspace{0.15cm}
1-j \\
\vspace{0.15cm}
-1-j \\
-2j \\
\end{array} } \right]\right\rangle $ ~~~~
$ 12. \left\langle \left[ {\begin{array}{cc}
\vspace{0.15cm}
1+j \\
\vspace{0.15cm}
1-j \\
-2 \\
\end{array} } \right]\right\rangle
=\left\langle \left[ {\begin{array}{cc}
\vspace{0.15cm}
-1-j \\
\vspace{0.15cm}
-1+j \\
2 \\
\end{array} } \right]\right\rangle
=\left\langle \left[ {\begin{array}{cc}
\vspace{0.15cm}
-1+j \\
\vspace{0.15cm}
1+j \\
-2j \\
\end{array} } \right]\right\rangle 
=\left\langle \left[ {\begin{array}{cc}
\vspace{0.15cm}
1-j \\
\vspace{0.15cm}
-1-j \\
2j \\
\end{array} } \right]\right\rangle $ \\
$ 13. \left\langle \left[ {\begin{array}{cc}
\vspace{0.15cm}
1+j \\
\vspace{0.15cm}
-1+j \\
2j \\
\end{array} } \right]\right\rangle
=\left\langle \left[ {\begin{array}{cc}
\vspace{0.15cm}
-1-j \\
\vspace{0.15cm}
1-j \\
-2j \\
\end{array} } \right]\right\rangle
=\left\langle \left[ {\begin{array}{cc}
\vspace{0.15cm}
-1+j \\
\vspace{0.15cm}
-1-j \\
-2 \\
\end{array} } \right]\right\rangle 
=\left\langle \left[ {\begin{array}{cc}
\vspace{0.15cm}
1-j \\
\vspace{0.15cm}
1+j \\
2 \\
\end{array} } \right]\right\rangle $ ~~~~
$ 14. \left\langle \left[ {\begin{array}{cc}
\vspace{0.15cm}
1+j \\
\vspace{0.15cm}
-1+j \\
-2j \\
\end{array} } \right]\right\rangle
=\left\langle \left[ {\begin{array}{cc}
\vspace{0.15cm}
-1-j \\
\vspace{0.15cm}
1-j \\
2j \\
\end{array} } \right]\right\rangle
=\left\langle \left[ {\begin{array}{cc}
\vspace{0.15cm}
-1+j \\
\vspace{0.15cm}
-1-j \\
2 \\
\end{array} } \right]\right\rangle 
=\left\langle \left[ {\begin{array}{cc}
\vspace{0.15cm}
1-j \\
\vspace{0.15cm}
1+j \\
-2 \\
\end{array} } \right]\right\rangle $ \\
$ 15. \left\langle \left[ {\begin{array}{cc}
\vspace{0.15cm}
1+j \\
\vspace{0.15cm}
-1+j \\
2 \\
\end{array} } \right]\right\rangle
=\left\langle \left[ {\begin{array}{cc}
\vspace{0.15cm}
-1-j \\
\vspace{0.15cm}
1-j \\
-2 \\
\end{array} } \right]\right\rangle
=\left\langle \left[ {\begin{array}{cc}
\vspace{0.15cm}
-1+j \\
\vspace{0.15cm}
-1-j \\
2j \\
\end{array} } \right]\right\rangle 
=\left\langle \left[ {\begin{array}{cc}
\vspace{0.15cm}
1-j \\
\vspace{0.15cm}
1+j \\
-2j \\
\end{array} } \right]\right\rangle $ ~~~
$ 16. \left\langle \left[ {\begin{array}{cc}
\vspace{0.15cm}
1+j \\
\vspace{0.15cm}
-1+j \\
-2 \\
\end{array} } \right]\right\rangle
=\left\langle \left[ {\begin{array}{cc}
\vspace{0.15cm}
-1-j \\
\vspace{0.15cm}
1-j \\
2 \\
\end{array} } \right]\right\rangle
=\left\langle \left[ {\begin{array}{cc}
\vspace{0.15cm}
-1+j \\
\vspace{0.15cm}
-1-j \\
-2j \\
\end{array} } \right]\right\rangle 
=\left\langle \left[ {\begin{array}{cc}
\vspace{0.15cm}
1-j \\
\vspace{0.15cm}
1+j \\
2j \\
\end{array} } \right]\right\rangle. $
\label{fig:sfssubcase2}
\caption{Null Spaces of the Singular Fades Spaces for the case $x_A - x'_A, ~ x_B-x'_B \in \mathcal{D}_1$ and $ x_C-x'_C \in \mathcal{D}_2$.}
\end{figure*}

\begin{figure*}
\centering
\tiny
$ 1. \left\langle \left[ {\begin{array}{cc}
\vspace{0.15cm}
1+j \\
\vspace{0.15cm}
1+j \\
1+j \\
\end{array} } \right]\right\rangle
=\left\langle \left[ {\begin{array}{cc}
\vspace{0.15cm}
-1-j \\
\vspace{0.15cm}
-1-j \\
-1-j \\
\end{array} } \right]\right\rangle
=\left\langle \left[ {\begin{array}{cc}
\vspace{0.15cm}
-1+j \\
\vspace{0.15cm}
-1+j \\
-1+j \\
\end{array} } \right]\right\rangle 
=\left\langle \left[ {\begin{array}{cc}
\vspace{0.15cm}
1-j \\
\vspace{0.15cm}
1-j \\
1-j \\
\end{array} } \right]\right\rangle
=\left\langle \left[ {\begin{array}{cc}
\vspace{0.15cm}
2j \\
\vspace{0.15cm}
2j \\
2j \\
\end{array} } \right]\right\rangle
=\left\langle \left[ {\begin{array}{cc}
\vspace{0.15cm}
-2j \\
\vspace{0.15cm}
-2j \\
-2j \\
\end{array} } \right]\right\rangle
=\left\langle \left[ {\begin{array}{cc}
\vspace{0.15cm}
2 \\
\vspace{0.15cm}
2 \\
2 \\
\end{array} } \right]\right\rangle 
=\left\langle \left[ {\begin{array}{cc}
\vspace{0.15cm}
-2 \\
\vspace{0.15cm}
-2 \\
-2 \\
\end{array} } \right]\right\rangle $\\ 
$ 2. \left\langle \left[ {\begin{array}{cc}
\vspace{0.15cm}
1+j \\
\vspace{0.15cm}
1+j \\
-1-j \\
\end{array} } \right]\right\rangle
=\left\langle \left[ {\begin{array}{cc}
\vspace{0.15cm}
-1-j \\
\vspace{0.15cm}
-1-j \\
1+j \\
\end{array} } \right]\right\rangle
=\left\langle \left[ {\begin{array}{cc}
\vspace{0.15cm}
-1+j \\
\vspace{0.15cm}
-1+j \\
1-j \\
\end{array} } \right]\right\rangle 
=\left\langle \left[ {\begin{array}{cc}
\vspace{0.15cm}
1-j \\
\vspace{0.15cm}
1-j \\
-1+j \\
\end{array} } \right]\right\rangle 
=\left\langle \left[ {\begin{array}{cc}
\vspace{0.15cm}
2j \\
\vspace{0.15cm}
2j \\
-2j \\
\end{array} } \right]\right\rangle
=\left\langle \left[ {\begin{array}{cc}
\vspace{0.15cm}
-2j \\
\vspace{0.15cm}
-2j \\
2j \\
\end{array} } \right]\right\rangle
=\left\langle \left[ {\begin{array}{cc}
\vspace{0.15cm}
2 \\
\vspace{0.15cm}
2 \\
-2 \\
\end{array} } \right]\right\rangle 
=\left\langle \left[ {\begin{array}{cc}
\vspace{0.15cm}
-2 \\
\vspace{0.15cm}
-2 \\
2 \\
\end{array} } \right]\right\rangle $\\ 
$ 3. \left\langle \left[ {\begin{array}{cc}
\vspace{0.15cm}
1+j \\
\vspace{0.15cm}
1+j \\
1-j \\
\end{array} } \right]\right\rangle
=\left\langle \left[ {\begin{array}{cc}
\vspace{0.15cm}
-1-j \\
\vspace{0.15cm}
-1-j \\
-1+j \\
\end{array} } \right]\right\rangle
=\left\langle \left[ {\begin{array}{cc}
\vspace{0.15cm}
-1+j \\
\vspace{0.15cm}
-1+j \\
1+j \\
\end{array} } \right]\right\rangle 
=\left\langle \left[ {\begin{array}{cc}
\vspace{0.15cm}
1-j \\
\vspace{0.15cm}
1-j \\
-1-j \\
\end{array} } \right]\right\rangle 
=\left\langle \left[ {\begin{array}{cc}
\vspace{0.15cm}
2j \\
\vspace{0.15cm}
2j \\
2 \\
\end{array} } \right]\right\rangle
=\left\langle \left[ {\begin{array}{cc}
\vspace{0.15cm}
-2j \\
\vspace{0.15cm}
-2j \\
-2 \\
\end{array} } \right]\right\rangle
=\left\langle \left[ {\begin{array}{cc}
\vspace{0.15cm}
2 \\
\vspace{0.15cm}
2 \\
-2j \\
\end{array} } \right]\right\rangle 
=\left\langle \left[ {\begin{array}{cc}
\vspace{0.15cm}
-2 \\
\vspace{0.15cm}
-2 \\
2j \\
\end{array} } \right]\right\rangle $\\ 
$ 4. \left\langle \left[ {\begin{array}{cc}
\vspace{0.15cm}
1+j \\
\vspace{0.15cm}
1+j \\
-1+j \\
\end{array} } \right]\right\rangle
=\left\langle \left[ {\begin{array}{cc}
\vspace{0.15cm}
-1-j \\
\vspace{0.15cm}
-1-j \\
1-j \\
\end{array} } \right]\right\rangle
=\left\langle \left[ {\begin{array}{cc}
\vspace{0.15cm}
-1+j \\
\vspace{0.15cm}
-1+j \\
-1-j \\
\end{array} } \right]\right\rangle 
=\left\langle \left[ {\begin{array}{cc}
\vspace{0.15cm}
1-j \\
\vspace{0.15cm}
1-j \\
1+j \\
\end{array} } \right]\right\rangle 
=\left\langle \left[ {\begin{array}{cc}
\vspace{0.15cm}
2j \\
\vspace{0.15cm}
2j \\
-2 \\
\end{array} } \right]\right\rangle
=\left\langle \left[ {\begin{array}{cc}
\vspace{0.15cm}
-2j \\
\vspace{0.15cm}
-2j \\
2 \\
\end{array} } \right]\right\rangle
=\left\langle \left[ {\begin{array}{cc}
\vspace{0.15cm}
2 \\
\vspace{0.15cm}
2 \\
2j \\
\end{array} } \right]\right\rangle 
=\left\langle \left[ {\begin{array}{cc}
\vspace{0.15cm}
-2 \\
\vspace{0.15cm}
-2 \\
-2j \\
\end{array} } \right]\right\rangle $\\ 
$ 5. \left\langle \left[ {\begin{array}{cc}
\vspace{0.15cm}
1+j \\
\vspace{0.15cm}
-1-j \\
1+j \\
\end{array} } \right]\right\rangle
=\left\langle \left[ {\begin{array}{cc}
\vspace{0.15cm}
-1-j \\
\vspace{0.15cm}
1+j \\
-1+j \\
\end{array} } \right]\right\rangle
=\left\langle \left[ {\begin{array}{cc}
\vspace{0.15cm}
-1+j \\
\vspace{0.15cm}
1-j \\
-1+j \\
\end{array} } \right]\right\rangle 
=\left\langle \left[ {\begin{array}{cc}
\vspace{0.15cm}
1-j \\
\vspace{0.15cm}
-1+j \\
1-j \\
\end{array} } \right]\right\rangle 
=\left\langle \left[ {\begin{array}{cc}
\vspace{0.15cm}
2j \\
\vspace{0.15cm}
-2j \\
2j \\
\end{array} } \right]\right\rangle
=\left\langle \left[ {\begin{array}{cc}
\vspace{0.15cm}
-2j \\
\vspace{0.15cm}
2j \\
-2j \\
\end{array} } \right]\right\rangle
=\left\langle \left[ {\begin{array}{cc}
\vspace{0.15cm}
2 \\
\vspace{0.15cm}
-2 \\
2 \\
\end{array} } \right]\right\rangle 
=\left\langle \left[ {\begin{array}{cc}
\vspace{0.15cm}
-2 \\
\vspace{0.15cm}
2 \\
-2 \\
\end{array} } \right]\right\rangle $\\ 
$ 6. \left\langle \left[ {\begin{array}{cc}
\vspace{0.15cm}
1+j \\
\vspace{0.15cm}
-1-j \\
-1-j \\
\end{array} } \right]\right\rangle
=\left\langle \left[ {\begin{array}{cc}
\vspace{0.15cm}
-1-j \\
\vspace{0.15cm}
1+j \\
1+j \\
\end{array} } \right]\right\rangle
=\left\langle \left[ {\begin{array}{cc}
\vspace{0.15cm}
-1+j \\
\vspace{0.15cm}
1-j \\
1-j \\
\end{array} } \right]\right\rangle 
=\left\langle \left[ {\begin{array}{cc}
\vspace{0.15cm}
1-j \\
\vspace{0.15cm}
-1+j \\
-1+j \\
\end{array} } \right]\right\rangle 
=\left\langle \left[ {\begin{array}{cc}
\vspace{0.15cm}
2j \\
\vspace{0.15cm}
-2j \\
-2j \\
\end{array} } \right]\right\rangle
=\left\langle \left[ {\begin{array}{cc}
\vspace{0.15cm}
-2j \\
\vspace{0.15cm}
2j \\
2j \\
\end{array} } \right]\right\rangle
=\left\langle \left[ {\begin{array}{cc}
\vspace{0.15cm}
2 \\
\vspace{0.15cm}
-2 \\
-2 \\
\end{array} } \right]\right\rangle 
=\left\langle \left[ {\begin{array}{cc}
\vspace{0.15cm}
-2 \\
\vspace{0.15cm}
2 \\
2 \\
\end{array} } \right]\right\rangle $\\ 
$ 7. \left\langle \left[ {\begin{array}{cc}
\vspace{0.15cm}
1+j\\
\vspace{0.15cm}
-1-j \\
-1+j \\
\end{array} } \right]\right\rangle
=\left\langle \left[ {\begin{array}{cc}
\vspace{0.15cm}
-1-j \\
\vspace{0.15cm}
1+j \\
1-j \\
\end{array} } \right]\right\rangle
=\left\langle \left[ {\begin{array}{cc}
\vspace{0.15cm}
-1+j \\
\vspace{0.15cm}
1-j \\
-1-j \\
\end{array} } \right]\right\rangle 
=\left\langle \left[ {\begin{array}{cc}
\vspace{0.15cm}
1-j \\
\vspace{0.15cm}
-1+j \\
1+j \\
\end{array} } \right]\right\rangle 
=\left\langle \left[ {\begin{array}{cc}
\vspace{0.15cm}
2j \\
\vspace{0.15cm}
-2j \\
-2 \\
\end{array} } \right]\right\rangle
=\left\langle \left[ {\begin{array}{cc}
\vspace{0.15cm}
-2j \\
\vspace{0.15cm}
2j \\
2 \\
\end{array} } \right]\right\rangle
=\left\langle \left[ {\begin{array}{cc}
\vspace{0.15cm}
2 \\
\vspace{0.15cm}
-2 \\
2j \\
\end{array} } \right]\right\rangle 
=\left\langle \left[ {\begin{array}{cc}
\vspace{0.15cm}
-2 \\
\vspace{0.15cm}
2 \\
-2j \\
\end{array} } \right]\right\rangle $\\ 
$ 8. \left\langle \left[ {\begin{array}{cc}
\vspace{0.15cm}
1+j \\
\vspace{0.15cm}
-1-j \\
1-j \\
\end{array} } \right]\right\rangle
=\left\langle \left[ {\begin{array}{cc}
\vspace{0.15cm}
-1-j \\
\vspace{0.15cm}
1+j \\
-1+j \\
\end{array} } \right]\right\rangle
=\left\langle \left[ {\begin{array}{cc}
\vspace{0.15cm}
-1+j \\
\vspace{0.15cm}
1-j \\
1+j \\
\end{array} } \right]\right\rangle 
=\left\langle \left[ {\begin{array}{cc}
\vspace{0.15cm}
1-j \\
\vspace{0.15cm}
-1+j \\
-1-j \\
\end{array} } \right]\right\rangle
=\left\langle \left[ {\begin{array}{cc}
\vspace{0.15cm}
2j \\
\vspace{0.15cm}
-2j \\
2 \\
\end{array} } \right]\right\rangle
=\left\langle \left[ {\begin{array}{cc}
\vspace{0.15cm}
-2j \\
\vspace{0.15cm}
2j \\
-2 \\
\end{array} } \right]\right\rangle
=\left\langle \left[ {\begin{array}{cc}
\vspace{0.15cm}
2 \\
\vspace{0.15cm}
-2 \\
-2j \\
\end{array} } \right]\right\rangle 
=\left\langle \left[ {\begin{array}{cc}
\vspace{0.15cm}
-2 \\
\vspace{0.15cm}
2 \\
2j \\
\end{array} } \right]\right\rangle $\\ 
$ 9. \left\langle \left[ {\begin{array}{cc}
\vspace{0.15cm}
1+j \\
\vspace{0.15cm}
1-j \\
1+j \\
\end{array} } \right]\right\rangle
=\left\langle \left[ {\begin{array}{cc}
\vspace{0.15cm}
-1-j \\
\vspace{0.15cm}
-1+j \\
-1-j \\
\end{array} } \right]\right\rangle
=\left\langle \left[ {\begin{array}{cc}
\vspace{0.15cm}
-1+j \\
\vspace{0.15cm}
1+j \\
-1+j \\
\end{array} } \right]\right\rangle 
=\left\langle \left[ {\begin{array}{cc}
\vspace{0.15cm}
1-j \\
\vspace{0.15cm}
-1-j \\
1-j \\
\end{array} } \right]\right\rangle 
=\left\langle \left[ {\begin{array}{cc}
\vspace{0.15cm}
2j \\
\vspace{0.15cm}
2 \\
2j \\
\end{array} } \right]\right\rangle
=\left\langle \left[ {\begin{array}{cc}
\vspace{0.15cm}
-2j \\
\vspace{0.15cm}
-2 \\
-2j \\
\end{array} } \right]\right\rangle
=\left\langle \left[ {\begin{array}{cc}
\vspace{0.15cm}
2 \\
\vspace{0.15cm}
-2j \\
2 \\
\end{array} } \right]\right\rangle 
=\left\langle \left[ {\begin{array}{cc}
\vspace{0.15cm}
-2 \\
\vspace{0.15cm}
-2 \\
2j \\
\end{array} } \right]\right\rangle $\\ 
$ 10. \left\langle \left[ {\begin{array}{cc}
\vspace{0.15cm}
1+j \\
\vspace{0.15cm}
1-j \\
-1-j \\
\end{array} } \right]\right\rangle
=\left\langle \left[ {\begin{array}{cc}
\vspace{0.15cm}
-1-j \\
\vspace{0.15cm}
-1+j \\
1+j \\
\end{array} } \right]\right\rangle
=\left\langle \left[ {\begin{array}{cc}
\vspace{0.15cm}
-1+j \\
\vspace{0.15cm}
1+j \\
1-j \\
\end{array} } \right]\right\rangle 
=\left\langle \left[ {\begin{array}{cc}
\vspace{0.15cm}
1-j \\
\vspace{0.15cm}
-1-j \\
-1+j \\
\end{array} } \right]\right\rangle 
=\left\langle \left[ {\begin{array}{cc}
\vspace{0.15cm}
2j \\
\vspace{0.15cm}
2 \\
-2j \\
\end{array} } \right]\right\rangle
=\left\langle \left[ {\begin{array}{cc}
\vspace{0.15cm}
-2j \\
\vspace{0.15cm}
-2 \\
2j \\
\end{array} } \right]\right\rangle
=\left\langle \left[ {\begin{array}{cc}
\vspace{0.15cm}
2 \\
\vspace{0.15cm}
-2j \\
-2 \\
\end{array} } \right]\right\rangle 
=\left\langle \left[ {\begin{array}{cc}
\vspace{0.15cm}
-2 \\
\vspace{0.15cm}
2j \\
2 \\
\end{array} } \right]\right\rangle $\\ 
$ 11. \left\langle \left[ {\begin{array}{cc}
\vspace{0.15cm}
1+j \\
\vspace{0.15cm}
1-j \\
1-j \\
\end{array} } \right]\right\rangle
=\left\langle \left[ {\begin{array}{cc}
\vspace{0.15cm}
-1-j \\
\vspace{0.15cm}
-1+j \\
-1+j \\
\end{array} } \right]\right\rangle
=\left\langle \left[ {\begin{array}{cc}
\vspace{0.15cm}
-1+j \\
\vspace{0.15cm}
1+j \\
1+j \\
\end{array} } \right]\right\rangle 
=\left\langle \left[ {\begin{array}{cc}
\vspace{0.15cm}
1-j \\
\vspace{0.15cm}
-1-j \\
-1-j \\
\end{array} } \right]\right\rangle 
=\left\langle \left[ {\begin{array}{cc}
\vspace{0.15cm}
2j \\
\vspace{0.15cm}
2 \\
2 \\
\end{array} } \right]\right\rangle
=\left\langle \left[ {\begin{array}{cc}
\vspace{0.15cm}
-2j \\
\vspace{0.15cm}
-2 \\
-2 \\
\end{array} } \right]\right\rangle
=\left\langle \left[ {\begin{array}{cc}
\vspace{0.15cm}
2 \\
\vspace{0.15cm}
-2j \\
-2j \\
\end{array} } \right]\right\rangle 
=\left\langle \left[ {\begin{array}{cc}
\vspace{0.15cm}
-2 \\
\vspace{0.15cm}
2j \\
2j \\
\end{array} } \right]\right\rangle $\\ 
$ 12. \left\langle \left[ {\begin{array}{cc}
\vspace{0.15cm}
1+j \\
\vspace{0.15cm}
1-j \\
-1+j \\
\end{array} } \right]\right\rangle
=\left\langle \left[ {\begin{array}{cc}
\vspace{0.15cm}
-1-j \\
\vspace{0.15cm}
-1+j \\
1-j \\
\end{array} } \right]\right\rangle
=\left\langle \left[ {\begin{array}{cc}
\vspace{0.15cm}
-1+j \\
\vspace{0.15cm}
1+j \\
-1-j \\
\end{array} } \right]\right\rangle 
=\left\langle \left[ {\begin{array}{cc}
\vspace{0.15cm}
1-j \\
\vspace{0.15cm}
-1-j \\
1+j \\
\end{array} } \right]\right\rangle 
=\left\langle \left[ {\begin{array}{cc}
\vspace{0.15cm}
2j \\
\vspace{0.15cm}
2 \\
-2 \\
\end{array} } \right]\right\rangle
=\left\langle \left[ {\begin{array}{cc}
\vspace{0.15cm}
-2j \\
\vspace{0.15cm}
-2 \\
2 \\
\end{array} } \right]\right\rangle
=\left\langle \left[ {\begin{array}{cc}
\vspace{0.15cm}
2 \\
\vspace{0.15cm}
-2j \\
2j \\
\end{array} } \right]\right\rangle 
=\left\langle \left[ {\begin{array}{cc}
\vspace{0.15cm}
-2 \\
\vspace{0.15cm}
2j \\
-2j \\
\end{array} } \right]\right\rangle $\\ 
$ 13. \left\langle \left[ {\begin{array}{cc}
\vspace{0.15cm}
1+j \\
\vspace{0.15cm}
-1+j \\
1+j \\
\end{array} } \right]\right\rangle
=\left\langle \left[ {\begin{array}{cc}
\vspace{0.15cm}
-1-j \\
\vspace{0.15cm}
1-j \\
-1-j \\
\end{array} } \right]\right\rangle
=\left\langle \left[ {\begin{array}{cc}
\vspace{0.15cm}
-1+j \\
\vspace{0.15cm}
-1-j \\
-1+j \\
\end{array} } \right]\right\rangle 
=\left\langle \left[ {\begin{array}{cc}
\vspace{0.15cm}
1-j \\
\vspace{0.15cm}
1+j \\
1-j \\
\end{array} } \right]\right\rangle 
=\left\langle \left[ {\begin{array}{cc}
\vspace{0.15cm}
2j \\
\vspace{0.15cm}
-2 \\
2j \\
\end{array} } \right]\right\rangle
=\left\langle \left[ {\begin{array}{cc}
\vspace{0.15cm}
-2j \\
\vspace{0.15cm}
2 \\
-2j \\
\end{array} } \right]\right\rangle
=\left\langle \left[ {\begin{array}{cc}
\vspace{0.15cm}
2 \\
\vspace{0.15cm}
2j \\
2 \\
\end{array} } \right]\right\rangle 
=\left\langle \left[ {\begin{array}{cc}
\vspace{0.15cm}
-2 \\
\vspace{0.15cm}
-2j \\
-2 \\
\end{array} } \right]\right\rangle $\\ 
$ 14. \left\langle \left[ {\begin{array}{cc}
\vspace{0.15cm}
1+j \\
\vspace{0.15cm}
-1+j \\
-1-j \\
\end{array} } \right]\right\rangle
=\left\langle \left[ {\begin{array}{cc}
\vspace{0.15cm}
-1-j \\
\vspace{0.15cm}
1-j \\
1+j \\
\end{array} } \right]\right\rangle
=\left\langle \left[ {\begin{array}{cc}
\vspace{0.15cm}
-1+j \\
\vspace{0.15cm}
-1-j \\
1-j \\
\end{array} } \right]\right\rangle 
=\left\langle \left[ {\begin{array}{cc}
\vspace{0.15cm}
1-j \\
\vspace{0.15cm}
1+j \\
-1+j \\
\end{array} } \right]\right\rangle 
=\left\langle \left[ {\begin{array}{cc}
\vspace{0.15cm}
2j \\
\vspace{0.15cm}
-2 \\
-2j \\
\end{array} } \right]\right\rangle
=\left\langle \left[ {\begin{array}{cc}
\vspace{0.15cm}
-2j \\
\vspace{0.15cm}
2 \\
2j \\
\end{array} } \right]\right\rangle
=\left\langle \left[ {\begin{array}{cc}
\vspace{0.15cm}
2 \\
\vspace{0.15cm}
2j \\
-2 \\
\end{array} } \right]\right\rangle 
=\left\langle \left[ {\begin{array}{cc}
\vspace{0.15cm}
-2 \\
\vspace{0.15cm}
-2j \\
2 \\
\end{array} } \right]\right\rangle $\\ 
$ 15. \left\langle \left[ {\begin{array}{cc}
\vspace{0.15cm}
1+j \\
\vspace{0.15cm}
-1+j \\
1-j \\
\end{array} } \right]\right\rangle
=\left\langle \left[ {\begin{array}{cc}
\vspace{0.15cm}
-1-j \\
\vspace{0.15cm}
1-j \\
-1+j \\
\end{array} } \right]\right\rangle
=\left\langle \left[ {\begin{array}{cc}
\vspace{0.15cm}
-1+j \\
\vspace{0.15cm}
-1-j \\
1+j \\
\end{array} } \right]\right\rangle 
=\left\langle \left[ {\begin{array}{cc}
\vspace{0.15cm}
1-j \\
\vspace{0.15cm}
1+j \\
-1-j \\
\end{array} } \right]\right\rangle 
=\left\langle \left[ {\begin{array}{cc}
\vspace{0.15cm}
2j \\
\vspace{0.15cm}
-2 \\
2 \\
\end{array} } \right]\right\rangle
=\left\langle \left[ {\begin{array}{cc}
\vspace{0.15cm}
-2j \\
\vspace{0.15cm}
2 \\
-2 \\
\end{array} } \right]\right\rangle
=\left\langle \left[ {\begin{array}{cc}
\vspace{0.15cm}
2 \\
\vspace{0.15cm}
2j \\
-2j \\
\end{array} } \right]\right\rangle 
=\left\langle \left[ {\begin{array}{cc}
\vspace{0.15cm}
-2 \\
\vspace{0.15cm}
-2j \\
2j \\
\end{array} } \right]\right\rangle $\\ 
$ 16. \left\langle \left[ {\begin{array}{cc}
\vspace{0.15cm}
1+j \\
\vspace{0.15cm}
-1+j \\
-1+j \\
\end{array} } \right]\right\rangle
=\left\langle \left[ {\begin{array}{cc}
\vspace{0.15cm}
-1-j \\
\vspace{0.15cm}
1-j \\
1-j \\
\end{array} } \right]\right\rangle
=\left\langle \left[ {\begin{array}{cc}
\vspace{0.15cm}
-1+j \\
\vspace{0.15cm}
-1-j \\
-1-j \\
\end{array} } \right]\right\rangle 
=\left\langle \left[ {\begin{array}{cc}
\vspace{0.15cm}
1-j \\
\vspace{0.15cm}
1+j \\
1+j \\
\end{array} } \right]\right\rangle
=\left\langle \left[ {\begin{array}{cc}
\vspace{0.15cm}
2j \\
\vspace{0.15cm}
-2 \\
-2 \\
\end{array} } \right]\right\rangle
=\left\langle \left[ {\begin{array}{cc}
\vspace{0.15cm}
-2j \\
\vspace{0.15cm}
2 \\
2 \\
\end{array} } \right]\right\rangle
=\left\langle \left[ {\begin{array}{cc}
\vspace{0.15cm}
2 \\
\vspace{0.15cm}
2j \\
2j \\
\end{array} } \right]\right\rangle 
=\left\langle \left[ {\begin{array}{cc}
\vspace{0.15cm}
-2 \\
\vspace{0.15cm}
-2j\\
-2j \\
\end{array} } \right]\right\rangle $.
\label{fig:sfssubcase3}
\caption{Null Spaces of the Singular Fades Spaces for the case $x_A - x'_A, ~ x_B-x'_B \text{~and~} x_C-x'_C \in \mathcal{D}_1$.}
\end{figure*}

\textit{Subcase 2:} Two of $x_A- x'_A,~ x_B-x'_B \text{~and~} x_C-x'_C \in \mathcal{D}_1$. Without loss of generality, we assume that $x_A-x'_A, x_B-x'_B \in \mathcal{D}_1$ and $x_C-x'_C \in \mathcal{D}_2$. The singular fade subspace for the case is given by (\ref{scase3}) and again there are 64 possibilities for the vector $v''=\left[x_{A}-x'_A, ~ x_B-x'_B, ~ x_C-x'_C\right]^{t}$. For a given $v''$, possibilities of other 3 length vectors over $ \mathcal{D}_1 \cup \mathcal{D}_2$ that generate the same vector space over $\mathbb{C}$ are the $\left\{\pm1, \pm j \right\}$ scalar multiples of $v''$. As a result, the case $x_A-x'_A \in \mathcal{D}_1$ and $x_B-x'_B, ~ x_C-x'_C \in \mathcal{D}_2$ also leads to 16 singular fade subspaces as shown for this case in Figure (\ref{fig:sfssubcase2}). The same holds for the case when $x_B-x'_B \in \mathcal{D}_2$ and $x_A-x'_A, ~ x_C-x'_C \in \mathcal{D}_1$, or when $x_A-x'_A \in \mathcal{D}_2$ and $x_B-x'_B, ~ x_C-x'_C \in \mathcal{D}_1$ resulting in therefore 48 singular fade subspaces. \\\\
\textit{Subcase 3:} All of $x_A- x'_A,~ x_B-x'_B \text{~and~} x_C-x'_C \in \mathcal{D}_1$. There are 64 possibilities for the vector $\left[x_{A}-x'_A, ~ x_B-x'_B, ~ x_C-x'_C\right]^{t}$ over $\mathcal{D}_1$. For a given such vector, possibilities of other 3 length vectors over $ \mathcal{D}_1 \cup \mathcal{D}_2$ that generate the same vector space over $\mathbb{C}$ are the $\left\{\pm1, \pm j, \pm1\pm j \right\}$ scalar multiples of the vector. The case $x_A-x'_A, ~x_B-x'_B, ~ x_C-x'_C \in \mathcal{D}_1$ leads to a total of 16 singular fade subspaces as shown in Figure (\ref{fig:sfssubcase3}).\\

The three cases result in a total of 3+36+48+48+16=151 singular fade subspaces. We now discuss how these singular fade subspaces can be removed using Latin Cubes of Second order.

\section{Removing singular fade subspaces and Constrained Latin Cubes}

In the previous section, we classify the set of singular fade subspaces into three cases. We now cluster the possibilities of $\left(x_{A},x_{B},x_{C}\right)$ into a clustering using Latin Cubes. This clustering is represented by a constellation given by $\mathcal{S}'$, which is utilized by the relay node R in the BC phase. The objective is to minimize the size of this constellation used by R. \\

The clustering to be used at R, first constrains the possibilities of $\left(x_{A},x_{B},x_{C}\right)$ received at the MA phase, with the objective of removing the singular fade subspaces, and fills the entries of a $4 \times 4\times 4 $ array representing the map to be used at the relay using these constraints, and then completes the partially empty array obtained to form a Latin cube of second order. The mapping to be used at R can be obtained from the complete Latin cube keeping in mind the equivalence between the relay map with the Latin Cube of second order as shown in Section III. \\

In order to to obtain the constraints on the $4 \times 4 \times 4$ array representing the map at the relay node R during BC phase for a singular fade state, we utilize the vectors of differences, viz., $ \left[x_A-x'_A ~ x_B-x'_B,~ x_C-x'_C\right]$ contributing to that particular singular fade state. During MA phase for the three-way relaying scenario, nodes A, B and C transmit to the relay R. As shown in the previous section, there is a total of 151 singular fade subspaces. Let $(h_A, h_B, h_C)$ denote a point in one of the 151 singular fade subspaces. Then, there exist $\left(x_{A}, x_{B}, x_C\right), \left(x'_{A},x'_{B},x'_C\right) \in \mathcal{S}^3$ that satisfy $ h_A x_{A} + h_B x_{B} + h_C x_C= h_A x'_{A} + h_B x'_{B} + h_C x'_C$. In order to remove the singular fade state $(h_A, h_B, h_C)$, the pair $\left(x_{A}, x_{B}, x_C\right), \left(x'_{A},x'_{B},x'_C\right)$ must be kept in the same cluster in the clustering. For instance, if 
$$(h_A, h_B, h_C) \in \left\langle \left[ {\begin{array}{cc}
\vspace{0.15cm}
x_A- x'_A \\
\vspace{0.15cm}
x_B-x'_B\\
x_C-x'_C \\
\end{array} } \right]\right\rangle ^{\bot},  $$ then the pair $\left(x_{A}, x_{B}, x_C\right), \left(x'_{A},x'_{B},x'_C\right)$ must be kept in the same cluster in the clustering, i.e., the entry corresponding to $\left(x_{A}, x_{B}, x_C\right)$ in the $4 \times 4 \times 4$ array must be the same as the entry corresponding to $\left(x'_{A},x'_{B},x'_C\right)$. Similarly, every other such pair in $\mathcal{S}^3$ contributing to the same singular fade subspace must be kept in the same cluster. Apart from all such pairs in $\mathcal{S}^3$ being kept in the same cluster of the clustering, in order to remove this particular fade state, there are no other constraints. Upon filling up the $4 \times 4 \times 4$ array with these constraints, the entire Latin Cube does not get filled up, but can be completed as we show later in this section. It is important to note that, this clustering cannot be utilized to remove the singular fade subspaces of Case 1 and Case 2 of the previous section, as shown in the following lemma.\\

\begin{lemma}
The clustering map used at the relay node R cannot remove a singular fade state corresponding to the Case 1 and Case 2 of the previous section and simultaneously satisfy the mutually exclusive law.
\end{lemma}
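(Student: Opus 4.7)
The plan is to derive a direct contradiction between the singularity-removal requirement and the exclusive laws (\ref{mel1})--(\ref{mel3}). In both Case 1 and Case 2 of the previous section, the difference vector $(x_A-x'_A,\,x_B-x'_B,\,x_C-x'_C)$ has at least one coordinate equal to zero, which means the two tuples $(x_A,x_B,x_C)$ and $(x'_A,x'_B,x'_C)$ that generate the singularity-removal constraint agree in at least one of their entries. Since removing the singular fade state forces these two tuples to be placed in the same cluster (by Definition 5), I intend to show that the relevant exclusive law forbids that placement, which yields the desired impossibility.

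Concretely, I would pick one representative subcase from each Case and verify the contradiction. For Case~1, take subcase (3), i.e.\ $x_A\neq x'_A$, $x_B=x'_B$, $x_C=x'_C$. Removing the singular fade state requires $\mathcal{M}^{H_A,H_B,H_C}(x_A,x_B,x_C)=\mathcal{M}^{H_A,H_B,H_C}(x'_A,x_B,x_C)$. But since $x_A\neq x'_A$ we have $(x_A,x_C)\neq(x'_A,x_C)$, so (\ref{mel2}) requires these two images to be unequal; and similarly (\ref{mel3}) requires them to be unequal. Contradiction. For Case~2, take subcase (3), i.e.\ $x_A\neq x'_A$, $x_B\neq x'_B$, $x_C=x'_C$. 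Removing the singular fade state requires $\mathcal{M}^{H_A,H_B,H_C}(x_A,x_B,x_C)=\mathcal{M}^{H_A,H_B,H_C}(x'_A,x'_B,x_C)$. Since $(x_A,x_B)\neq(x'_A,x'_B)$, the exclusive law (\ref{mel3}) demands these images be unequal, again a contradiction.

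The remaining subcases of Cases~1 and~2 follow by the obvious symmetry between the three user coordinates: whichever coordinate is shared between the two tuples, one of the three exclusive laws (\ref{mel1})--(\ref{mel3}) becomes applicable (the one keyed by that coordinate), and it immediately forces the two cluster labels to differ. Equivalently, in Latin-cube language from Section II, two cells of the $4\times 4\times 4$ array that lie in a common file, row, or column cannot carry the same symbol, yet the singularity-removal constraints arising from Cases~1 and~2 demand precisely such a coincidence. The main obstacle I anticipate is simply the bookkeeping of which exclusive law to invoke for each subcase; once that matching is made, no further calculation is needed. Hence no clustering map on $\mathcal{S}^3$ can both satisfy the exclusive law and remove any singular fade state from Case~1 or Case~2.
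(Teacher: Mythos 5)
Your proposal is correct and follows essentially the same route as the paper's proof: in both Cases 1 and 2 the two tuples generating the singularity-removal constraint agree in at least one coordinate, so placing them in the same cluster directly violates the exclusive law keyed to that coordinate (the paper phrases this as the corresponding user(s) being unable to distinguish the other users' messages). Your explicit matching of each subcase to the relevant condition among (\ref{mel1})--(\ref{mel3}), and the Latin-cube restatement, are just more formal renderings of the same argument.
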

\begin{proof}
Let $\mathcal{S}'=\left\langle \left[ {\begin{array}{cc}
\vspace{0.15cm}
x_A- x'_A \\
\vspace{0.15cm}
x_B-x'_B\\
x_C-x'_C \\
\end{array} } \right]\right\rangle ^{\bot} $ be a singular fade state for Case 1, and without loss of generality, assume that $x_B=x'_B \text{~and~} x_C=x'_C$. Then, in order to remove $\mathcal{S}'$, $(x_A, x_B, x_C)$ and $(x'_A, x_B, x_C)$ must be kept in the same cluster. This is impossible, since this clearly violates the mutually exclusive law as if the pair is placed in the same cluster, the users B and C will not be able to distinguish between the messages $x_A$ and $x'_A$ sent by the user A.\\

Let $\mathcal{S}'$ be a singular fade state for Case 2, and without loss of generality, assume that $x_C=x'_C$. Then, in order to remove $\mathcal{S}'$, $(x_A, x_B, x_C)$ and $(x'_A, x'_B, x_C)$ must be kept in the same cluster. This also clearly violates the mutually exclusive law since if the pair is placed in the same cluster, the user C will not be able to distinguish between the messages $x_A,~x_B$ and messages $x'_A, ~x'_B$ sent by the users A and B.
\end{proof}
The singular fade subspaces given in Case 1 ans Case 2 of the previous section, whose harmful effects cannot be removed by a proper choice of the clustering are referred to as the \textit{non-removable singular fade subspaces}. The rest of the singular fade subspaces, given in Case 3 of the previous section, are referred as the \textit{removable singular fade subspaces}.

We now illustrate the removal of a Case 3 singular fade state with the help of examples.\\
\begin{example}
Let the singular fade subspace be 
{\tiny
$$\mathcal{S}'= \left\langle \left[ {\begin{array}{cc}
\vspace{0.15cm}
1+j \\
\vspace{0.15cm}
2j \\
-2j \\
\end{array} } \right]\right\rangle ^{\bot}
=\left\langle \left[ {\begin{array}{cc}
\vspace{0.15cm}
-1-j \\
\vspace{0.15cm}
-2j \\
2j \\
\end{array} } \right]\right\rangle ^{\bot}
=\left\langle \left[ {\begin{array}{cc}
\vspace{0.15cm}
-1+j \\
\vspace{0.15cm}
-2 \\
2 \\
\end{array} } \right]\right\rangle ^{\bot}
=\left\langle \left[ {\begin{array}{cc}
\vspace{0.15cm}
1-j \\
\vspace{0.15cm}
2 \\
-2 \\
\end{array} } \right]\right\rangle ^{\bot} $$
}

Consider the first vector $\left[ 1+j, ~ 2j, ~ -2j \right]^t$. Here, $1+j$ can be obtained as a difference of $x_A=1$ and $x'_A=-j$ or as a difference of $x_A=j$ and $x'_A=-1$; $2j$ can be obtained as a difference of $x_B=j$ and $x'_B=-j$; and $-2j$ can be obtained as a difference of $x_C=-j$ and $x'_C=j$. Thus, the entries corresponding to $(1,j,-j)$ and $(-j,-j,j)$ must be the same and the entries corresponding to $(j,j,-j)$ and $(-1,-j,j)$ must be the same in the $4 \times 4 \times 4$ array representing the clustering, i.e., entries $(0,1,3)$ and $(3,3,1)$ must be the same and entries $(1,1,3)$ and $(2,3,1)$ must be the same.\\

The second vector $\left[ -1-j,~ -2j, ~ 2j \right]^t$ where, $-1-j$ can be obtained as a difference of $x_A=-1$ and $x'_A=j$ or as a difference of $x_A=-j$ and $x'_A=1$; $-2j$ can be obtained as a difference of $x_B=-j$ and $x'_B=j$; and $2j$ can be obtained as a difference of $x_C=j$ and $x'_C=-j$. Thus, the entries corresponding to $(-1,-j,j)$ and $(j,j,-j)$ must be the same and the entries corresponding to $(-j,-j,j)$ and $(1,j,-j)$ must be the same in the $4 \times 4 \times 4$ array representing the clustering, i.e., entries $(2,3,1)$ and $(1,1,3)$ must be the same and entries $(3,3,1)$ and $(0,1,3)$ must be the same. Similarly, the constraints resulting from the third and forth vector ca be obtained. The constrained array is shown in Figure 8.\\

We choose to fill the constrained array to form a Latin cube of second order as shown in adjoining Algorithm 1, which simply fills the empty cells of the partially filled $4 \times 4 \times 4$ array with $\mathcal{L}_{i}, ~ i \geq 1$ in increasing order of $i$ keeping in mind that the resulting array must be a Latin cube of second order. The top-most and the left-most empty cell in the earliest file is filled at every iteration. The completed Latin cube is shown in Figure 9.\\
\begin{figure*}[tp]
{\tiny 
\centering
%\begin{tabular}{c}
%\vspace{0.2cm}
\begin{minipage}[b]{0.24\linewidth}
\centering
%\begin{tabular}{|c|p{0.05cm}|p{0.05cm}|p{0.05cm}|p{0.05cm}|}\hline
\begin{tabular}{|c|c|c|c|c|}\hline 
        $x_A=0$ & 0 & 1 & 2 & 3 \\\hline \hline
        0   &                    &                    &$\mathcal{L}_{3}$   &            \\\hline
        1   &                   &                     &                    &$\mathcal{L}_{2}$  \\\hline
        2   &                   &                     &                   &                    \\\hline
        3   &                   &                     &                   &                   \\\hline
\end{tabular}
\end{minipage} 
\begin{minipage}[b]{0.24\linewidth}
\centering
\begin{tabular}{|c|c|c|c|c|}\hline 
        $x_A=1$ & 0 & 1 & 2 & 3 \\\hline \hline
        0   &                  &                    &                   &                    \\\hline
        1   &                   &                   &                    &$\mathcal{L}_{1}$  \\\hline
        2   &$\mathcal{L}_{3} $  &                  &                     &                   \\\hline
        3   &                   &                   &                     &                   \\\hline
\end{tabular}
\end{minipage} 
\begin{minipage}[b]{0.22\linewidth}
\centering
\begin{tabular}{|c|c|c|c|c|}\hline 
        $x_A=2$ & 0 & 1 & 2 & 3 \\\hline \hline
        0   &                   &                    &                    &                    \\\hline
        1   &                   &                    &                    &                   \\\hline
        2   &$\mathcal{L}_{4} $ &                    &                   &                    \\\hline
        3   &                   &$\mathcal{L}_{1}$  &                    &                  \\\hline
\end{tabular}
\end{minipage} 
\begin{minipage}[b]{0.24\linewidth}
\vspace{0.2cm}
\centering
\begin{tabular}{|c|c|c|c|c|}\hline 
        $x_A=3$ & 0 & 1 & 2 & 3 \\\hline \hline
        0   &                    &                    &$\mathcal{L}_{4}$   &                     \\\hline
        1   &                     &                   &                    &                    \\\hline
        2   &                    &                   &                    &                    \\\hline
        3   &                     &$\mathcal{L}_{2}$  &                     &                    \\\hline
\end{tabular}
\end{minipage}
\vspace{0.2cm}
%\end{tabular}
\caption{Constraints for Example 1, with B's transmitted symbols along the rows and C's transmitted symbols along the columns}}
\end{figure*}

\end{example}

\begin{algorithm}
\SetLine
\linesnumbered
\KwIn{The constrained $4 \times 4\times 4$ array}
\KwOut{A Latin Cube of Second Order representing the clustering map at the relay}

Start with the constrained $4 \times 4\times 4$ array

Initialize all empty cells of $\mathcal{X}$ to 0

Let $\mathcal{Y}$ denote the $4 \times 16$ matrix obtained by concatenating $\mathcal{X}$ row-wise, and let $\mathcal{Z}$ denote the $16 \times 4$ matrix obtained by concatenating $\mathcal{X}$ column-wise

The $\left(i,j,k\right)^{th}$ element of $\mathcal{X}$ is the $i^{th}$ file, the $j^{th}$ row and the $k^{th}$ column cell.

\For{$1\leq i\leq 4 $}{

\For{$1\leq j\leq 4 $}{

\For{$1\leq k\leq 4 $}{

\If{cell $\left(i,j,k\right)$ of $\mathcal{X}$ is NULL}{

Initialize c=1

\eIf{$\mathcal{L}_{c}$ does not occur in the $i^{th}$ file of $\mathcal{X}$, the $j^{th}$ row of $\mathcal{Y}$ and the $k^{th}$ column of $\mathcal{Z}$}{
replace 0 at cell $\left(i,j,k\right)$ of $\mathcal{X}$ with $\mathcal{L}_{c}$\;
replace $\mathcal{Y}$ with the $4 \times 16$ matrix obtained by concatenating $\mathcal{X}$ row-wise, and $\mathcal{Z}$ by the $16 \times 4$ matrix obtained by concatenating $\mathcal{X}$ column-wise\;
}{
c=c+1\; 
}
}
}
}
}
%
%\For{$c' \geq 1$}{
%
%\If{$\mathcal{L}_{c'}$ occurs twice at positions $\left(i_{1}, j_{1}, k_{1}\right)$ and $\left(i_{2}, j_{2}, k_{2}\right)$ in $\mathcal{X}$}{
%
%\For{$d \geq 1$}{
%
%\If{$\mathcal{L}_{d}$ occurs thrice in $\mathcal{X}$}{
%
%\If{$\mathcal{L}_{d}$ does not occur in the ${i_{1}}^{th}$ file of $\mathcal{X}$, the ${j_{1}}^{th}$ row of $\mathcal{Y}$ and the ${k_{1}}^{th}$ column of $\mathcal{Z}$}{
%replace $\mathcal{L}_{c'}$ at cell $\left(i_{1}, j_{1}, k_{1}\right)$ of $\mathcal{A}$ with $\mathcal{L}_{d}$\;
%replace $\mathcal{Y}$ with the $4 \times 16$ matrix obtained by concatenating $\mathcal{X}$ row-wise, and $\mathcal{Z}$ by the $16 \times 4$ matrix obtained by concatenating $\mathcal{X}$ column-wise\;
%
%}
%}
%}
%
%repeat for position $\left(i_{2}, j_{2}, k_{2}\right)$\;
%
%}
%}
\caption{Obtaining the Latin Cube of Second Order from the constrained $4 \times 4 \times 4 $ array}

\end{algorithm}

\begin{figure*}[tp]
{\tiny
\centering
%\begin{tabular}{c}
%\vspace{0.2cm}
\begin{minipage}[b]{0.24\linewidth}
%\centering
\begin{tabular}{|c|p{0.19cm}|p{0.19cm}|p{0.19cm}|p{0.19cm}|}\hline        $x_A=0$ & 0 & 1 & 2 & 3 \\\hline \hline
        0   & $\mathcal{L}_{1}$ &  $\mathcal{L}_{5}$ &$\pmb{\mathcal{L}}_{\pmb{3}}$   & $\mathcal{L}_{6}$  \\\hline
        1   & $\mathcal{L}_{7}$ &  $\mathcal{L}_{4}$  & $\mathcal{L}_{8}$   &$\pmb{\mathcal{L}}_{\pmb{2}}$  \\\hline
        2   & $\mathcal{L}_{9}$ &  $\mathcal{L}_{10}$  & $\mathcal{L}_{11}$ &  $\mathcal{L}_{12}$              \\\hline
        3   & $\mathcal{L}_{13}$ &   $\mathcal{L}_{14}$ & $\mathcal{L}_{15}$ &  $\mathcal{L}_{16}$             \\\hline
\end{tabular}
\end{minipage} 
\begin{minipage}[b]{0.24\linewidth}
%\centering
\begin{tabular}{|c|p{0.19cm}|p{0.19cm}|p{0.19cm}|p{0.19cm}|}\hline 
        $x_A=1$ & 0 & 1 & 2 & 3 \\\hline \hline
        0   & $\mathcal{L}_{2}$ & $\mathcal{L}_{7}$  &  $\mathcal{L}_{9}$ &  $\mathcal{L}_{8}$  \\\hline
        1   & $\mathcal{L}_{5}$  & $\mathcal{L}_{6}$  &  $\mathcal{L}_{10}$ &$\pmb{\mathcal{L}}_{\pmb{1}}$  \\\hline
        2   &$\pmb{\mathcal{L}}_{\pmb{3}} $  &  $\mathcal{L}_{13}$   &  $\mathcal{L}_{14}$   &  $\mathcal{L}_{15}$   \\\hline
        3   & $\mathcal{L}_{11}$  &  $\mathcal{L}_{12}$  &  $\mathcal{L}_{17}$   &  $\mathcal{L}_{4}$  \\\hline
\end{tabular}
\end{minipage} 
\begin{minipage}[b]{0.24\linewidth}
%\centering
\begin{tabular}{|c|p{0.19cm}|p{0.19cm}|p{0.19cm}|p{0.19cm}|}\hline         $x_A=2$ & 0 & 1 & 2 & 3 \\\hline \hline
        0   & $\mathcal{L}_{10}$ & $\mathcal{L}_{11}$ &  $\mathcal{L}_{12}$ &  $\mathcal{L}_{13}$  \\\hline
        1   & $\mathcal{L}_{14}$ & $\mathcal{L}_{3}$ & $\mathcal{L}_{16}$  &  $\mathcal{L}_{9}$ \\\hline
        2   &$\pmb{\mathcal{L}}_{\pmb{4}} $ & $\mathcal{L}_{8}$ &$\mathcal{L}_{2}$ &  $\mathcal{L}_{5}$   \\\hline
        3   & $\mathcal{L}_{6}$ &$\pmb{\mathcal{L}}_{\pmb{1}}$  & $\mathcal{L}_{7}$  & $\mathcal{L}_{18}$ \\\hline
\end{tabular}
\end{minipage} 
\begin{minipage}[b]{0.24\linewidth}
\vspace{0.2 cm}
%\centering
\begin{tabular}{|c|p{0.19cm}|p{0.19cm}|p{0.19cm}|p{0.19cm}|}\hline         $x_A=3$ & 0 & 1 & 2 & 3 \\\hline \hline
        0   & $\mathcal{L}_{15}$  &  $\mathcal{L}_{16}$ &$\pmb{\mathcal{L}}_{\pmb{4}}$   &  $\mathcal{L}_{14}$    \\\hline
        1   & $\mathcal{L}_{12}$ & $\mathcal{L}_{17}$  &  $\mathcal{L}_{13}$  & $\mathcal{L}_{11}$ \\\hline
        2   & $\mathcal{L}_{18}$  & $\mathcal{L}_{19}$ &  $\mathcal{L}_{1}$   &  $\mathcal{L}_{7}$   \\\hline
        3   &  $\mathcal{L}_{8}$   &$\pmb{\mathcal{L}}_{\pmb{2}}$  & $\mathcal{L}_{5}$    & $\mathcal{L}_{3}$   \\\hline
\end{tabular}
\end{minipage}
%\end{tabular}
\caption{Latin Cube for Example 1, with B's transmitted symbols along the rows and C's transmitted symbols along the columns}}
\end{figure*}

\begin{example}
Consider another example for which the singular fade subspace is given by
{\tiny
$$\mathcal{S}''= \left\langle \left[ {\begin{array}{cc}
\vspace{0.15cm}
1+j \\
\vspace{0.15cm}
1+j \\
-1-j \\
\end{array} } \right]\right\rangle ^{\bot}
=\left\langle \left[ {\begin{array}{cc}
\vspace{0.15cm}
-1-j \\
\vspace{0.15cm}
-1-j \\
1+j \\
\end{array} } \right]\right\rangle ^{\bot}
=\left\langle \left[ {\begin{array}{cc}
\vspace{0.15cm}
-1+j \\
\vspace{0.15cm}
-1+j \\
1-j \\
\end{array} } \right]\right\rangle ^{\bot}
=\left\langle \left[ {\begin{array}{cc}
\vspace{0.15cm}
1-j \\
\vspace{0.15cm}
1-j \\
-1+j \\
\end{array} } \right]\right\rangle ^{\bot} $$
}
{\tiny
$$~~~~~~~= \left\langle \left[ {\begin{array}{cc}
\vspace{0.15cm}
2j \\
\vspace{0.15cm}
2j \\
-2j \\
\end{array} } \right]\right\rangle ^{\bot}
=\left\langle \left[ {\begin{array}{cc}
\vspace{0.15cm}
-2j \\
\vspace{0.15cm}
-2j \\
2j \\
\end{array} } \right]\right\rangle ^{\bot}
=\left\langle \left[ {\begin{array}{cc}
\vspace{0.15cm}
-2 \\
\vspace{0.15cm}
-2 \\
2 \\
\end{array} } \right]\right\rangle ^{\bot}
=\left\langle \left[ {\begin{array}{cc}
\vspace{0.15cm}
2 \\
\vspace{0.15cm}
2 \\
-2 \\
\end{array} } \right]\right\rangle ^{\bot}. $$
}

The first vector is $\left[ 1+j, ~ 1+j, ~ -1-j \right]$. Here, $1+j$ can be obtained as a difference of $x_A=1$ and $x'_A=-j$ or as a difference of $x_A=j$ and $x'_A=-1$; $-1-j$ can be obtained as a difference of $x_C=-1$ and $x'_C=j$ or as a difference of $x_C=-j$ and $x'_C=1$. Thus, the entries corresponding to {\footnotesize $\left\{(1,1,-1),(-j,-j,j)\right\}$, $\left\{(1,1,-j),(-j,-j,1)\right\}$,$\left\{(1,j,-1),(-j,-1,j)\right\}$,$\left\{(1,j,-j),(-j,-1,1)\right\}$, $\left\{(j,1,-1),(-1,-j,j)\right\}$,$\left\{(j,1,-j),(-1,-j,1)\right\}$,$\left\{(j,j,-1),(-1,-1,j)\right\}$, $\left\{(j,j,-j),(-1,-1,1)\right\}$ } must be the same in the $4 \times 4 \times 4$ array representing the clustering, i.e., entries $\left\{(0,0,2),(3,3,1)\right\}$, $\left\{(0,0,3),(3,3,0)\right\}$, $\left\{(0,1,2),(3,2,1)\right\}$, $\left\{(0,1,3),(3,2,0)\right\}$, $\left\{(1,0,2),(2,3,1)\right\}$, $\left\{(1,0,3),(2,3,0)\right\}$, $\left\{(1,1,2),(2,2,1)\right\}$, $\left\{(1,1,3),(2,2,0)\right\}$ must be the same. Similarly the other constraints can be obtained. The constrained array is shown in Figure \ref{fig:ex2fig1}, and the clustering is as shown in Figure \ref{fig:ex2fig2}.\\

\begin{figure*}[tp]
\centering
{\tiny
%\begin{tabular}{c}
%\vspace{0.2cm}
\begin{minipage}[b]{0.24\linewidth}
%\centering
\begin{tabular}{|c|c|c|c|c|}\hline
        $x_A=0$ & 0 & 1 & 2 & 3 \\\hline \hline
        0   &                   &  $\mathcal{L}_{5}$    &$\mathcal{L}_{1}$   &  $\mathcal{L}_{2}$   \\\hline
        1   &                   &                     & $\mathcal{L}_{3}$    &$\mathcal{L}_{4}$  \\\hline
        2   &                   &                     &                   &                    \\\hline
        3   &                   &  $\mathcal{L}_{6}$   & $\mathcal{L}_{7}$    &                   \\\hline
\end{tabular}
\end{minipage} 
\begin{minipage}[b]{0.24\linewidth}
\centering
\begin{tabular}{|c|c|c|c|c|}\hline
        $x_A=1$ & 0 & 1 & 2 & 3 \\\hline \hline
        0   &                  &                    &  $\mathcal{L}_{4}$   &  $\mathcal{L}_{3}$    \\\hline
        1   & $\mathcal{L}_{5}$    &                   &  $\mathcal{L}_{2}$    &$\mathcal{L}_{1}$  \\\hline
        2   &$\mathcal{L}_{6} $  &                  &                     &  $\mathcal{L}_{7}$    \\\hline
        3   &                   &                   &                     &                   \\\hline
\end{tabular}
\end{minipage} 
\begin{minipage}[b]{0.24\linewidth}
\centering
\begin{tabular}{|c|c|c|c|c|}\hline
        $x_A=2$ & 0 & 1 & 2 & 3 \\\hline \hline
        0   &                   &                    &                    &                    \\\hline
        1   & $\mathcal{L}_{7}$    &                    &                    &  $\mathcal{L}_{6}$  \\\hline
        2   &$\mathcal{L}_{1} $ & $\mathcal{L}_{2}$     &                   &  $\mathcal{L}_{5}$     \\\hline
        3   & $\mathcal{L}_{3}$  &$\mathcal{L}_{4}$  &                    &                  \\\hline
\end{tabular}
\end{minipage} 
\begin{minipage}[b]{0.24\linewidth}
\vspace{0.2cm}
\centering
\begin{tabular}{|c|c|c|c|c|}\hline
        $x_A=3$ & 0 & 1 & 2 & 3 \\\hline \hline
        0   &                    &  $\mathcal{L}_{7}$   &$\mathcal{L}_{6}$   &                     \\\hline
        1   &                     &                   &                    &                    \\\hline
        2   & $\mathcal{L}_{4}$    &  $\mathcal{L}_{3}$     &                    &                    \\\hline
        3   & $\mathcal{L}_{2}$     &$\mathcal{L}_{1}$  &  $\mathcal{L}_{5}$     &                    \\\hline
\end{tabular}
\end{minipage}
\vspace{0.2cm}
%\end{tabular}
\caption{Constraints for Example 1, with B's transmitted symbols along the rows and C's transmitted symbols along the columns}
\label{fig:ex2fig1}}
\end{figure*}

\begin{figure*}[tp]
\centering
{\tiny
%\begin{tabular}{c}
\begin{minipage}[b]{0.24\linewidth}
\centering
\begin{tabular}{|c|p{0.19cm}|p{0.19cm}|p{0.19cm}|p{0.19cm}|}\hline  
        $x_A=0$ & 0 & 1 & 2 & 3 \\\hline \hline
        0   & $\mathcal{L}_{8}$  &  $\pmb{\mathcal{L}}_{\pmb{5}}$    &$\pmb{\mathcal{L}}_{\pmb{1}}$   &  $\pmb{\mathcal{L}}_{\pmb{2}}$   \\\hline
        1   & $\mathcal{L}_{9}$  &  $\mathcal{L}_{10}$  & $\pmb{\mathcal{L}}_{\pmb{3}}$    &$\pmb{\mathcal{L}}_{\pmb{4}}$  \\\hline
        2   & $\mathcal{L}_{11}$ &   $\mathcal{L}_{12}$ & $\mathcal{L}_{13}$  & $\mathcal{L}_{14}$  \\\hline
        3   & $\mathcal{L}_{15}$ &  $\pmb{\mathcal{L}}_{\pmb{6}}$   & $\pmb{\mathcal{L}}_{\pmb{7}}$    &  $\mathcal{L}_{16}$   \\\hline
\end{tabular}
\end{minipage} 
\begin{minipage}[b]{0.24\linewidth}
\centering
\begin{tabular}{|c|p{0.19cm}|p{0.19cm}|p{0.19cm}|p{0.19cm}|}\hline          $x_A=1$ & 0 & 1 & 2 & 3 \\\hline \hline
        0   & $\mathcal{L}_{10}$  &  $\mathcal{L}_{9}$  &  $\pmb{\mathcal{L}}_{\pmb{4}}$   &  $\pmb{\mathcal{L}}_{\pmb{3}}$    \\\hline
        1   & $\pmb{\mathcal{L}}_{\pmb{5}}$    & $\mathcal{L}_{8}$   &  $\pmb{\mathcal{L}}_{\pmb{2}}$    &$\pmb{\mathcal{L}}_{\pmb{1}}$  \\\hline
        2   &$\pmb{\mathcal{L}}_{\pmb{6}} $  &  $\mathcal{L}_{15}$ &  $\mathcal{L}_{16}$  &  $\pmb{\mathcal{L}}_{\pmb{7}}$    \\\hline
        3   & $\mathcal{L}_{12}$  & $\mathcal{L}_{11}$  &  $\mathcal{L}_{14}$  &  $\mathcal{L}_{13}$  \\\hline
\end{tabular}
\end{minipage} 
\begin{minipage}[b]{0.24\linewidth}
\centering
\begin{tabular}{|c|p{0.19cm}|p{0.19cm}|p{0.19cm}|p{0.19cm}|}\hline          $x_A=2$ & 0 & 1 & 2 & 3 \\\hline \hline
        0   & $\mathcal{L}_{13}$  &  $\mathcal{L}_{14}$ & $\mathcal{L}_{11}$   &  $\mathcal{L}_{12}$   \\\hline
        1   & $\pmb{\mathcal{L}}_{\pmb{7}}$    &  $\mathcal{L}_{16}$  &  $\mathcal{L}_{15}$   &  $\pmb{\mathcal{L}}_{\pmb{6}}$  \\\hline
        2   &$\pmb{\mathcal{L}}_{\pmb{1}} $ & $\pmb{\mathcal{L}}_{\pmb{2}}$     &  $\mathcal{L}_{8}$  &  $\pmb{\mathcal{L}}_{\pmb{5}}$     \\\hline
        3   & $\pmb{\mathcal{L}}_{\pmb{3}}$  &$\pmb{\mathcal{L}}_{\pmb{4}}$  & $\mathcal{L}_{9}$   &   $\mathcal{L}_{10}$    \\\hline
\end{tabular}
\end{minipage}
\begin{minipage}[b]{0.24\linewidth}
\centering
\vspace{0.2cm}
\begin{tabular}{|c|p{0.19cm}|p{0.19cm}|p{0.19cm}|p{0.19cm}|}\hline          $x_A=3$ & 0 & 1 & 2 & 3 \\\hline \hline
        0   & $\mathcal{L}_{16}$  &  $\pmb{\mathcal{L}}_{\pmb{7}}$   &$\pmb{\mathcal{L}}_{\pmb{6}}$   &  $\mathcal{L}_{15}$    \\\hline
        1   & $\mathcal{L}_{14}$  &  $\mathcal{L}_{13}$   &  $\mathcal{L}_{12}$  & $\mathcal{L}_{11}$   \\\hline
        2   & $\pmb{\mathcal{L}}_{\pmb{4}}$    &  $\pmb{\mathcal{L}}_{\pmb{3}}$     &   $\mathcal{L}_{10}$   &  $\mathcal{L}_{9}$     \\\hline
        3   & $\pmb{\mathcal{L}}_{\pmb{2}}$     &$\pmb{\mathcal{L}}_{\pmb{1}}$  &  $\pmb{\mathcal{L}}_{\pmb{5}}$     &  $\mathcal{L}_{8}$     \\\hline
\end{tabular}
\end{minipage}
\vspace{0.2cm}
%\end{tabular}
\caption{Clustering for Example 2}
\label{fig:ex2fig2}}
\end{figure*}
\end{example}

For each of the 112 possibilities of singular fade subspaces of \textit{Case 3} of the previous section, a clustering of size between 16 to 23 can be achieved by first constraining the array representing the relay map in order to remove the singular fade state and then completing the constrained array using the provided algorithm. \\

\section{SIMULATION RESULTS}

The proposed scheme is based on the removal of singular fade states for the three-way relaying sccenario. A minimum cluster distance greater than zero is ensured for all the fade states, excluding a subset of singular fade states, which are shown to be non-removable. It is attempted to ensure that in the given scenario, the number of clusters in the clustering, which is the same as the size of the signal set used during the BC phase is minimized. Simulation results presented in this section identify some cases where the proposed scheme outperforms the naive approach that uses the same map for all fade states and vice verse All the simulation results shown in this section are for the case when the end nodes use 4-PSK signal set. The simulation results for the end to end BER as a function of SNR are presented in this section for different fading scenarios.\\

Consider the case when $H_A, H_B, H_C, H'_A, H'_B$ and $H'_C$ are distributed according to Rician distribution and channel variances equal to 0 dB. The SNR vs BER curve for this case, for a frame length of 256 bits. The plots for the cases with a Rician Factors of 5 dB, 10 dB, 15 dB and 20 dB are as shown in Fig. 13, Fig. 14, Fig. 15 and Fig. 16 respectively. The figures show the SNR vs bit-error-rate curves for the following schemes: adaptive clustering given in the paper, and non-adaptive clustering. For non-adaptive clustering, the relay node uses the same map given by Figure \ref{fig:xor} for all the channel realisations, we refer to this map as the non-adaptive map. It can be seen from Fig. \ref{fig:plot_bc_rician} that the schemes based on the adaptive clustering relaying perform better than the schemes based on non-adaptive clustering at low SNR, since adaptive clustering removes 112 singular fade states. \\ 

%The diversity gain in the case of adaptive network coding is 1, and adaptive network coding provides a gain of 2 dB over non-adaptive network coding.
It can be seen from the simulation results presented that the non-adaptive network coding performs better than the proposed scheme above a certain SNR when there is a dominant line of sight component, as in the case of Rician fading scenario. The reason for this is as follows: The end to end SNR vs BER as well as the throughput performance depend on the performance during the MA phase as well as the BC phase. As the line of sight component becomes more and more predominant, the performance during the BC phase gets better and better, but the effect of multiple access interference which occurs during the MA phase remains the same. Hence, for the cases when line of sight component is predominant, the performance degradation due to the MA interference predominates over the degradation occurring during the BC phase. The case of non-adaptive network coding, the relay utilizes a constellation of least possible size, whereas in adaptive network coding, the relay attempts at removing singular fade states, thereby optimizing the performance during MA phase to the fullest extent possible, at the cost of degraded performance during the BC phase. Hence, the non-adaptive network coding performs better than the proposed scheme at high SNR.\\

\begin{figure*}[ht]
\centering
{\tiny
%\begin{tabular}{c}
\vspace{0.2cm}
\begin{minipage}[b]{0.24\linewidth}
\centering
\begin{tabular}{|c|p{0.19cm}|p{0.19cm}|p{0.19cm}|p{0.19cm}|}\hline          $x_A=0$ & 0 & 1 & 2 & 3 \\\hline \hline
        0   &$\mathcal{L}_{1}$   &$\mathcal{L}_{2}$   &$\mathcal{L}_{3}$   &$\mathcal{L}_{4}$   \\\hline
        1   &$\mathcal{L}_{5}$   &$\mathcal{L}_{6} $  &$\mathcal{L}_{7}$   &$\mathcal{L}_{8}$  \\\hline
        2   &$\mathcal{L}_{9} $  &$\mathcal{L}_{10}  $ &$\mathcal{L}_{11}$   &$\mathcal{L}_{12}$   \\\hline
        3   &$\mathcal{L}_{13}$   &$\mathcal{L}_{14}$  &$\mathcal{L}_{15}$    &$\mathcal{L}_{16}$  \\\hline
\end{tabular}
\end{minipage} 
\begin{minipage}[b]{0.24\linewidth}
\centering
\begin{tabular}{|c|p{0.19cm}|p{0.19cm}|p{0.19cm}|p{0.19cm}|}\hline          $x_A=1$ & 0 & 1 & 2 & 3 \\\hline \hline
        0   &$\mathcal{L}_{6}$   &$\mathcal{L}_{5}$   &$\mathcal{L}_{8}$   &$\mathcal{L}_{7}$   \\\hline
        1   &$\mathcal{L}_{2}$   &$\mathcal{L}_{1} $  &$\mathcal{L}_{4}$   &$\mathcal{L}_{3}$  \\\hline
        2   &$\mathcal{L}_{14} $  &$\mathcal{L}_{13}  $ &$\mathcal{L}_{16}$   &$\mathcal{L}_{15}$   \\\hline
        3   &$\mathcal{L}_{10}$   &$\mathcal{L}_{9}$  &$\mathcal{L}_{12}$    &$\mathcal{L}_{11}$  \\\hline
\end{tabular}
\end{minipage} 
\begin{minipage}[b]{0.24\linewidth}
\centering
\begin{tabular}{|c|p{0.19cm}|p{0.19cm}|p{0.19cm}|p{0.19cm}|}\hline          $x_A=2$ & 0 & 1 & 2 & 3 \\\hline \hline
        0   &$\mathcal{L}_{11}$   &$\mathcal{L}_{12}$   &$\mathcal{L}_{9}$   &$\mathcal{L}_{10}$   \\\hline
        1   &$\mathcal{L}_{15}$   &$\mathcal{L}_{16} $  &$\mathcal{L}_{13}$   &$\mathcal{L}_{14}$  \\\hline
        2   &$\mathcal{L}_{3} $  &$\mathcal{L}_{4}  $ &$\mathcal{L}_{1}$   &$\mathcal{L}_{2}$   \\\hline
        3   &$\mathcal{L}_{7}$   &$\mathcal{L}_{8}$  &$\mathcal{L}_{5}$    &$\mathcal{L}_{6}$  \\\hline
\end{tabular}
\end{minipage} 
\hspace{0.2cm}
\begin{minipage}[b]{0.24\linewidth}
\vspace{0.2 cm}
\centering
\begin{tabular}{|c|p{0.19cm}|p{0.19cm}|p{0.19cm}|p{0.19cm}|}\hline          $x_A=3$ & 0 & 1 & 2 & 3 \\\hline \hline
        0   &$\mathcal{L}_{16}$   &$\mathcal{L}_{15}$   &$\mathcal{L}_{14}$   &$\mathcal{L}_{13}$   \\\hline
        1   &$\mathcal{L}_{12}$   &$\mathcal{L}_{11} $  &$\mathcal{L}_{10}$   &$\mathcal{L}_{9}$  \\\hline
        2   &$\mathcal{L}_{8} $  &$\mathcal{L}_{7}  $ &$\mathcal{L}_{6}$   &$\mathcal{L}_{5}$   \\\hline
        3   &$\mathcal{L}_{4}$   &$\mathcal{L}_{3}$  &$\mathcal{L}_{2}$    &$\mathcal{L}_{1}$  \\\hline
\end{tabular}
\end{minipage}
\vspace{0.2cm}}
%\end{tabular}}
\caption{Non-Adaptive map}
\label{fig:xor}
\end{figure*}

\begin{figure}[tp]
\centering
\includegraphics[totalheight=3in,width=3.8in]{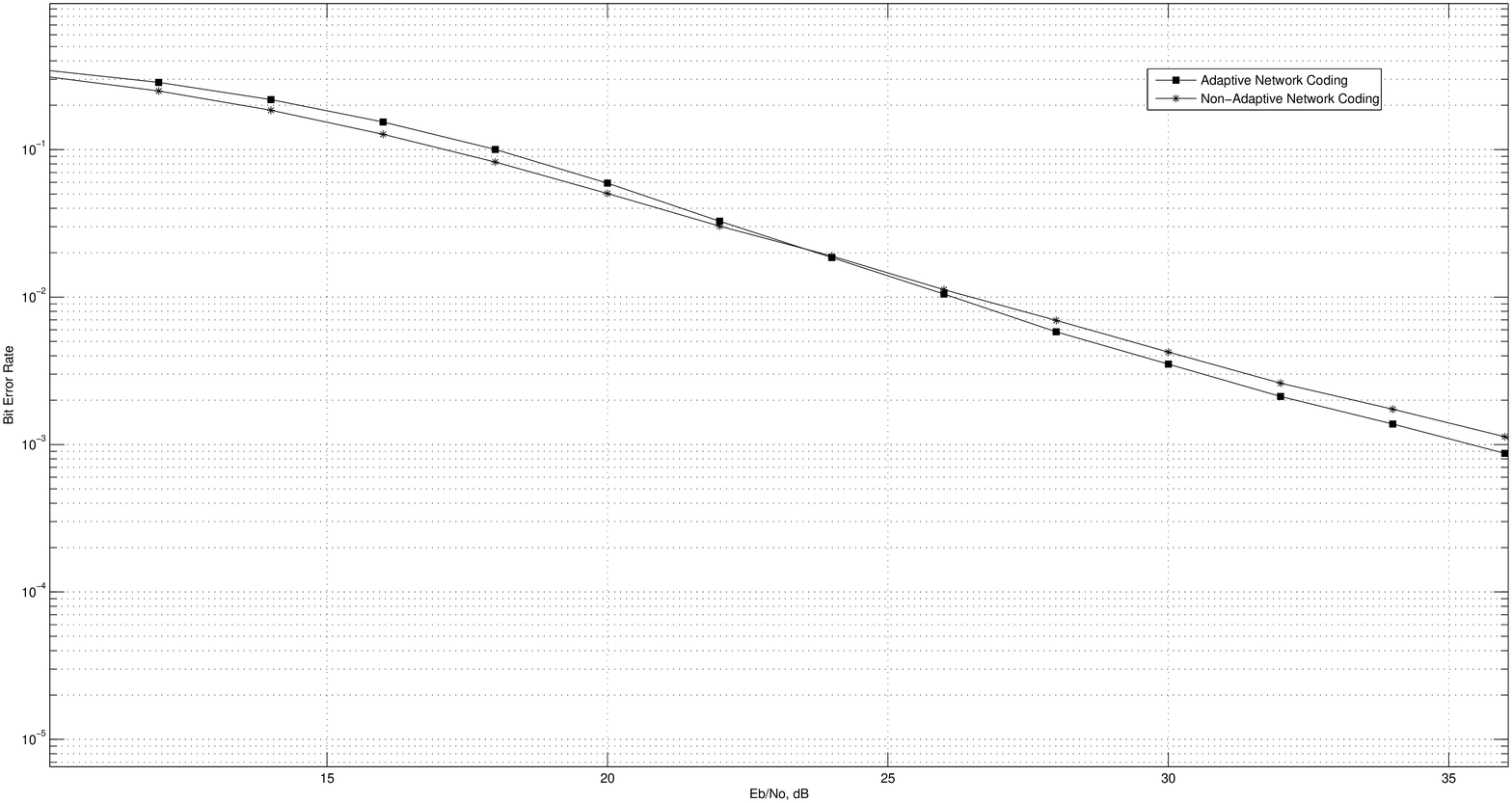}
\caption{SNR vs ber curves for different schemes for 4-PSK signal set when the Rician Factors is 5 dB}	
\label{fig:plot_bc_rician5}	
\end{figure}
\begin{figure}[tp]
\centering
\includegraphics[totalheight=3in,width=3.8in]{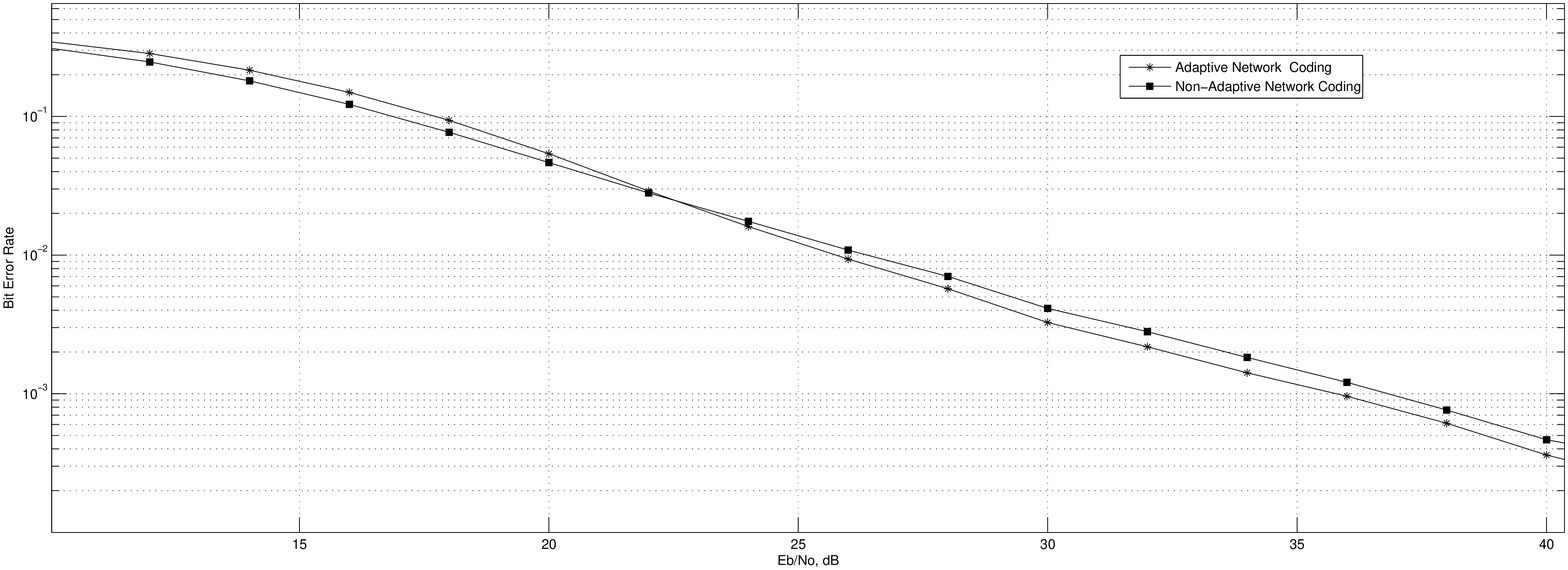}
\caption{SNR vs ber curves for different schemes for 4-PSK signal set when the Rician Factors is 10 dB}	
\label{fig:plot_bc_rician}	
\end{figure}
\begin{figure}[tp]
\centering
\includegraphics[totalheight=3in,width=3.8in]{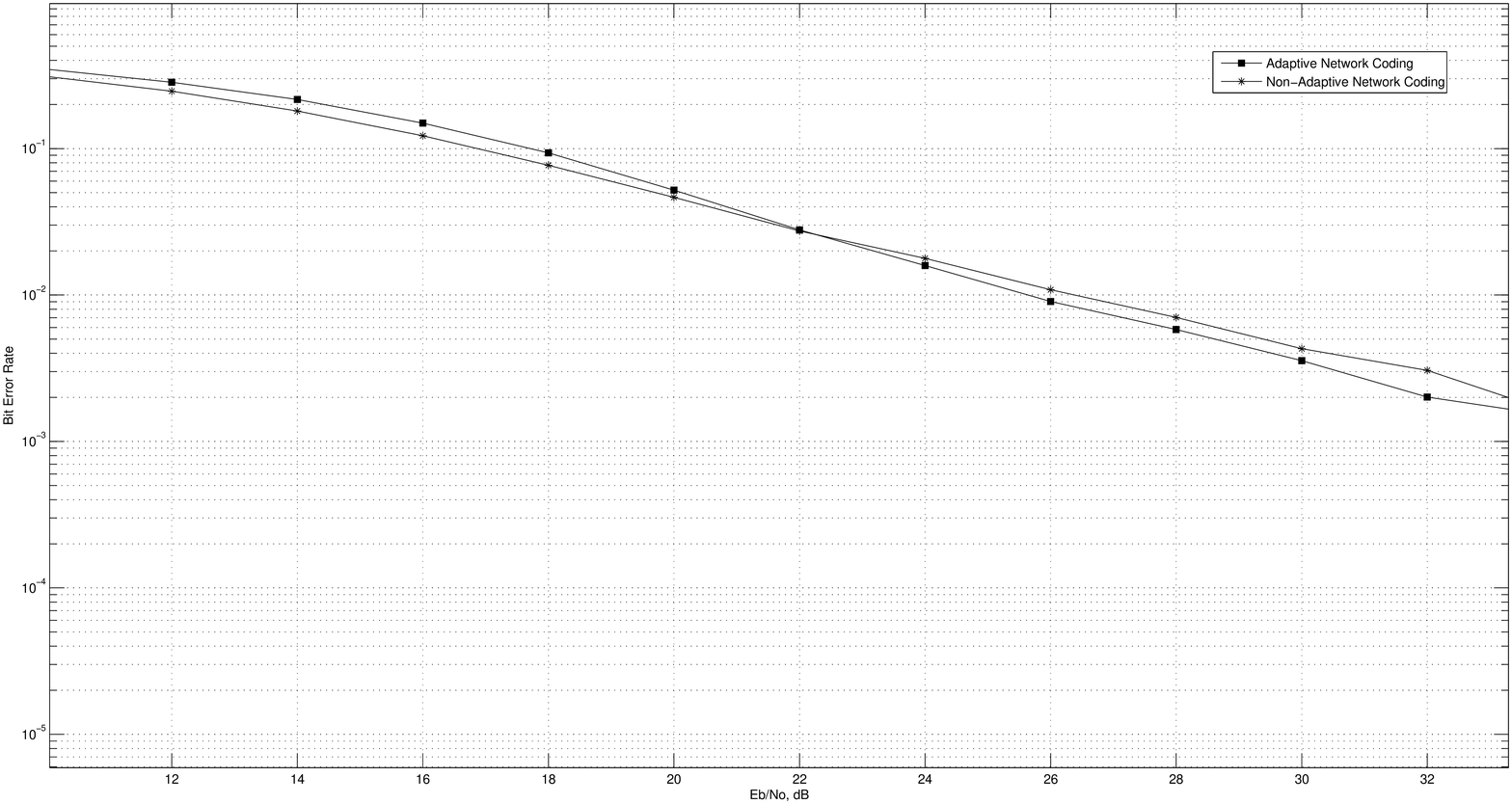}
\caption{SNR vs ber curves for different schemes for 4-PSK signal set when the Rician Factors is 15 dB}	
\label{fig:plot_bc_rician15}	
\end{figure}
\begin{figure}[tp]
\centering
\includegraphics[totalheight=3in,width=3.8in]{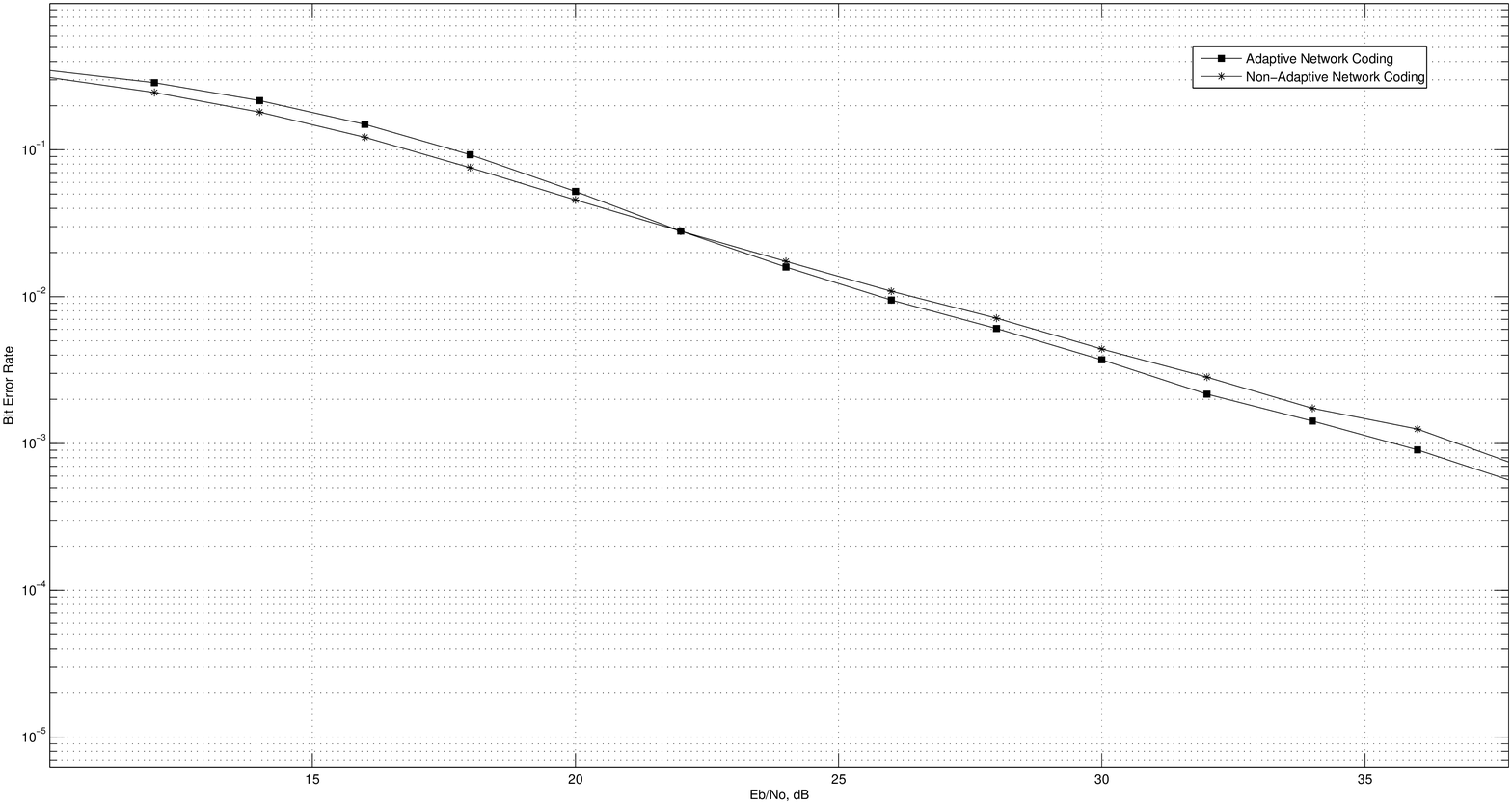}
\caption{SNR vs ber curves for different schemes for 4-PSK signal set when the Rician Factors is 20 dB}	
\label{fig:plot_bc_rician20}	
\end{figure}
%
%\begin{figure}[htbp]
%\centering
%\includegraphics[totalheight=3.25in,width=3.75in]{tput_curves_8psk.eps}
%\caption{SNR vs throughput curves for different schemes for 8-PSK signal set}	
%\label{fig:tput_curves_8psk}	
%\end{figure}

%The maximum throughput achieved by the ACF relaying schemes is 8/3 bits/s/Hz, whereas it is 2 bits/s/Hz for the 2-stage relaying schemes. Also, it can be seen from Fig. \ref{fig:tput_curves_4psk}, the scheme based on the removal of singular fade states using Latin Squares performs better than the CP based scheme. The reason for this is that the maximum cardinality of the signal set used during the BC phase is 25 for the CP based scheme whereas it is 20 for the Latin Square based scheme.\\

%Fig. \ref{fig:tput_curves_8psk} shows the SNR vs throughput curves for the different schemes, for the case when 8-PSK signal set is used by the nodes A and B. At high SNR, the CP based scheme provides a larger throughput than the 2-stage relaying schemes. The maximum throughput achieved by the CP based scheme is 4 bits/s/Hz, whereas it is 3 bits/s/Hz for the 2-stage relaying schemes.\\

\section{Conclusion}
Our paper deals with the three-way wireless relaying scenario, assuming that the three nodes operate in half-duplex mode and that they transmit points from the same 4-PSK constellation. It is shown that it is possible for information exchange to take place using just two channels uses, unlike the other work done for the case, to the best of our knowledge. The Relay node clusters the $4^{3}$ possible transmitted tuples $\left(x_{A},x_{B},x_{C}\right)$ into various clusters such that \textit{the exclusive law} is satisfied. This necessary requirement of satisfying the exclusive law is shown to be the same as the clustering being represented by a Latin Cube of second order. Using the proposed schemes, not only is the exchange of information between the three nodes made possible using three channel uses, the size of the resulting constellation used by the relay node R in the BC phase is reduced from $4^{3}$ to lie between 16 to 23. Note that we do not claim that the size of the clustering utilizing modified clustering is the best that can be achieved, since our method of filling the Latin Cube of Second Order of side 4 may not be the most optimal process of doing so, and it might be possible to fill the array with less than 23 symbols. \\

\section*{Acknowledgement}
 This work was supported partly by the DRDO-IISc program on Advanced Research in Mathematical Engineering through a research grant as well as the INAE Chair Professorship grant to B. S. Rajan.\\

\end{document}